%% build: latexmk -pdf -pvc paper.tex
\pdfoutput=1
\documentclass[acmsmall,screen]{acmart}
%% \documentclass[nonacm,acmsmall,review]{acmart}
%% \raggedbottom

%% \BibTeX command to typeset BibTeX logo in the docs
\AtBeginDocument{%
  \providecommand\BibTeX{{%
    \normalfont B\kern-0.5em{\scshape i\kern-0.25em b}\kern-0.8em\TeX}}}

%%% The following is specific to ICFP '22 and the paper
%%% 'Staged Compilation with Two-Level Type Theory'
%%% by András Kovács.
%%%
\setcopyright{rightsretained}
\acmPrice{}
\acmDOI{10.1145/3547641}
\acmYear{2022}
\copyrightyear{2022}
\acmSubmissionID{icfp22main-p52-p}
\acmJournal{PACMPL}
\acmVolume{6}
\acmNumber{ICFP}
\acmArticle{110}
\acmMonth{8}

%% These commands are for a JOURNAL article.

%% \acmJournal{JACM}
%% \acmVolume{37}
%% \acmNumber{4}
%% \acmArticle{111}
%% \acmMonth{8}

%%
%% Submission ID.
%% Use this when submitting an article to a sponsored event. You'll
%% receive a unique submission ID from the organizers
%% of the event, and this ID should be used as the parameter to this command.
%%\acmSubmissionID{123-A56-BU3}

\bibliographystyle{ACM-Reference-Format}
%%
%% The majority of ACM publications use numbered citations and
%% references.  The command \citestyle{authoryear} switches to the
%% "author year" style.
%%
%% If you are preparing content for an event
%% sponsored by ACM SIGGRAPH, you must use the "author year" style of
%% citations and references.
%% Uncommenting
%% the next command will enable that style.
\citestyle{acmauthoryear}

%% --------------------------------------------------------------------------------

\usepackage{xcolor}
\usepackage{mathpartir}
\usepackage{todonotes}
\presetkeys{todonotes}{inline}{}
\usepackage{scalerel}
\usepackage{bm}

\newcommand{\mit}[1]{\mathit{#1}}
\newcommand{\msf}[1]{\mathsf{#1}}
\newcommand{\mbb}[1]{\mathbb{#1}}

\newcommand{\bs}[1]{\boldsymbol{#1}}
\newcommand{\wh}[1]{\widehat{#1}}
\newcommand{\ext}{\triangleright}
\newcommand{\Code}{\msf{Code}}
\newcommand{\El}{\msf{El}}
\newcommand{\lam}{\msf{lam}}
\newcommand{\app}{\msf{app}}
\newcommand{\NatElim}{\msf{NatElim}}

\newcommand{\Lift}{{\Uparrow}}
\newcommand{\spl}{{\sim}}
\newcommand{\qut}[1]{\langle #1\rangle}

\newcommand{\mbbc}{\mbb{C}}
\newcommand{\mbbo}{\mbb{O}}
\newcommand{\ob}{_\mbbo}

\renewcommand{\U}{\msf{U}}
\newcommand{\Con}{\msf{Con}}
\newcommand{\Sub}{\msf{Sub}}
\newcommand{\Ty}{\msf{Ty}}
\newcommand{\Tm}{\msf{Tm}}
\newcommand{\Cono}{\msf{Con}_{\mbbo}}
\newcommand{\Subo}{\msf{Sub}_{\mbbo}}

\newcommand{\hCon}{\wh{\msf{Con}}}
\newcommand{\hSub}{\wh{\msf{Sub}}}
\newcommand{\hTy}{\wh{\msf{Ty}}}
\newcommand{\hTm}{\wh{\msf{Tm}}}

\newcommand{\p}{\mathsf{p}}
\newcommand{\q}{\mathsf{q}}

\newcommand{\refl}{\msf{refl}}
\newcommand{\Bool}{\msf{Bool}}
\newcommand{\true}{\msf{true}}
\newcommand{\false}{\msf{false}}
\newcommand{\True}{\msf{True}}
\newcommand{\False}{\msf{False}}
\newcommand{\List}{\msf{List}}
\newcommand{\nil}{\msf{nil}}
\newcommand{\cons}{\msf{cons}}
\newcommand{\Nat}{\msf{Nat}}
\newcommand{\zero}{\msf{zero}}
\newcommand{\suc}{\msf{suc}}
\renewcommand{\tt}{\msf{tt}}
\newcommand{\fst}{\msf{fst}}
\newcommand{\snd}{\msf{snd}}
\newcommand{\mylet}{\msf{let}}
\newcommand{\emptycon}{\scaleobj{.75}\bullet}
\newcommand{\id}{\msf{id}}

\newcommand{\Set}{\mathsf{Set}}
\newcommand{\Prop}{\mathsf{Prop}}
\newcommand{\Rep}{\msf{Rep}}
\newcommand{\blank}{{\mathord{\hspace{1pt}\text{--}\hspace{1pt}}}}
\newcommand{\emb}[1]{\ulcorner#1\urcorner}

\newcommand{\Stage}{\msf{Stage}}
\newcommand{\hato}{\bm\hat{\mbbo}}
\newcommand{\ev}{\mbb{E}}
\newcommand{\re}{\mbb{R}}

\theoremstyle{remark}
\newtheorem{notation}{Notation}

\newcommand{\whset}{\wh{\Set}}
\newcommand{\rexti}{\re_{\ext_1}^{-1}}
\newcommand{\rextizero}{\re_{\ext_0}^{-1}}

\newcommand{\rel}{^{\approx}}
\newcommand{\yon}{\msf{y}}

%% --------------------------------------------------------------------------------

%%
%% end of the preamble, start of the body of the document source.
%% \hypersetup{draft}
\begin{document}

\title{Staged Compilation with Two-Level Type Theory}
\titlenote{The author was supported by the “Application Domain Specific Highly
  Reliable IT Solutions” project which has been implemented with support from
  the National Research, Development and Innovation Fund of Hungary, financed
  under the Thematic Excellence Programme TKP2020-NKA-06 (National Challenges
  Subprogramme) funding scheme.}

%% The "author" command and its associated commands are used to define
%% the authors and their affiliations.
%% Of note is the shared affiliation of the first two authors, and the
%% "authornote" and "authornotemark" commands
%% used to denote shared contribution to the research.
\author{András Kovács}
\email{kovacsandras@inf.elte.hu}
\orcid{0000-0002-6375-9781}
\affiliation{%
  \institution{Eötvös Loránd University}
  \country{Hungary}
  \city{Budapest}
}

\begin{abstract}
The aim of staged compilation is to enable metaprogramming in a way such that we
have guarantees about the well-formedness of code output, and we can also mix
together object-level and meta-level code in a concise and convenient manner. In
this work, we observe that two-level type theory (2LTT), a system originally
devised for the purpose of developing synthetic homotopy theory, also serves as
a system for staged compilation with dependent types. 2LTT has numerous good
properties for this use case: it has a concise specification, well-behaved model
theory, and it supports a wide range of language features both at the object and
the meta level. First, we give an overview of 2LTT's features and applications
in staging. Then, we present a staging algorithm and prove its correctness. Our
algorithm is ``staging-by-evaluation'', analogously to the technique of
normalization-by-evaluation, in that staging is given by the evaluation of 2LTT
syntax in a semantic domain. The staging algorithm together with its correctness
constitutes a proof of strong conservativity of 2LLT over the object theory. To our
knowledge, this is the first description of staged compilation which supports
full dependent types and unrestricted staging for types.
\end{abstract}

\begin{CCSXML}
<ccs2012>
   <concept>
       <concept_id>10003752.10003790.10011740</concept_id>
       <concept_desc>Theory of computation~Type theory</concept_desc>
       <concept_significance>500</concept_significance>
       </concept>
   <concept>
       <concept_id>10011007.10011006.10011041.10011047</concept_id>
       <concept_desc>Software and its engineering~Source code generation</concept_desc>
       <concept_significance>500</concept_significance>
       </concept>
 </ccs2012>
\end{CCSXML}

\ccsdesc[500]{Theory of computation~Type theory}
\ccsdesc[500]{Software and its engineering~Source code generation}

\keywords{type theory, two-level type theory, staged compilation}

\maketitle

\section{Introduction}\label{sec:introduction}

The purpose of staged compilation is to write code-generating programs in a
safe, ergonomic and expressive way. It is always possible to do ad-hoc code
generation, by simply manipulating strings or syntax trees in a sufficiently
expressive programming language. However, these approaches tend to suffer from
verbosity, non-reusability and lack of safety. In staged compilation, there are
certain \emph{restrictions} on which metaprograms are expressible. Usually,
staged systems enforce typing discipline, prohibit arbitrary manipulation of
object-level scopes, and often they also prohibit accessing the internal
structure of object expressions. On the other hand, we get \emph{guarantees}
about the well-scoping or well-typing of the code output, and we are also able
to use concise syntax for embedding object-level code.

\emph{Two-level type theory}, or 2LTT in short, was described in
\cite{twolevel}, building on ideas from \cite{hts}. The motivation was to allow
convenient metatheoretical reasoning about a certain mathematical language
(homotopy type theory), and to enable concise and modular ways to extend the
language with axioms.

It turns out that metamathematical convenience closely corresponds to
metaprogramming convenience: 2LTT can be directly and effectively employed in
staged compilation. Moreover, semantic ideas underlying 2LTT are also directly
applicable to the theory of staging.

\subsection{Overview \& Contributions}\label{sec:overview}

\begin{itemize}
  \item In Section \ref{sec:tour-of-2ltt} we present an informal syntax of
    two-level type theory, a dependent type theory with support for two-stage
    compilation. We look at basic use cases involving inlining control, partial
    evaluation and fusion optimizations. We also describe feature variations,
    enabling applications in monomorphization and memory layout control.
  \item In Section \ref{sec:formal-2ltt}, we specify the formal notion of models
    and syntax for the variant of 2LTT used in this paper. We mostly work with a
    high-level algebraic presentation, where the syntax is quotiented by
    conversion. However, we also explain how to extract functions
    operating on non-quotiented syntax, by interpreting type-theoretic
    constructions in a setoid model.
  \item
    In Section \ref{sec:staging-algorithm}, we review the standard presheaf
    model of 2LTT \cite[Section~2.5.3]{twolevel}, which lies over the syntactic
    category of the object theory. We show that evaluation in this model yields
    a staging algorithm for closed types and terms. We then extend staging to
    open types and terms which may depend on object-level variables. We show
    stability of staging, which means that staging is surjective up to
    conversion, and we also show that staging strictly preserves all type and
    term formers. Finally, we discuss efficiency and potential practical implementations
    of staging.
  \item
    In Section \ref{sec:soundness} we show soundness of staging, which roughly
    means that the output of staging is convertible to its input. Staging
    together with its stability and soundness can be viewed as a \emph{strong
    conservativity} theorem of 2LTT over the object theory. This means that the
    same constructions can be expressed in the object theory and the
    object-level fragment of 2LTT, up to conversion, and staging witnesses that
    meta-level constructions can be always computed away. This strengthens the
    weak notion of conservativity shown in \cite{capriotti2017models} and
    \cite{twolevel}.
  \item
    In Section \ref{sec:intensional-analysis}, we discuss possible semantic
    interpretations of intensional code analysis, i.e.\ the ability to look into
    the structure of object-level code.
  \item We provide a small standalone implementation of elaboration and staging
    for a two-level type theory \cite{staged-demo}.
  \item To our knowledge, this work is the first formalization and implementation of
    staged compilation in the presence of full dependent types, with
    universes and large elimination. In particular, we allow unrestricted
    staging for types, so that types can be computed by metaprograms at compile
    time.
\end{itemize}

\section{A Tour of Two-Level Type Theory}\label{sec:tour-of-2ltt}

In this section, we provide a short overview of 2LTT and its potential
applications in staging. We work in the informal syntax of a dependently typed
language which resembles Agda \cite{agdadocs}. We focus on examples and informal
explanation here; the formal details will be presented in Section \ref{sec:formal-2ltt}.

\begin{notation}\label{basic-notation}
We use the following notations throughout the paper. $(x : A) \to B$ denotes a
dependent function type, where $x$ may occur in $B$. We use $\lambda\,x.\,t$ for
abstraction. A $\Sigma$-type is written as $(x : A) \times B$, with pairing as
$(t,\,u)$, projections as $\fst$ and $\snd$, and we may use pattern matching
notation on pairs, e.g.\ as in $\lambda\,(x,\,y).\,t$. The unit type is $\top$
with element $\tt$. We will also use Agda-style notation for implicit arguments,
where $t : \{x : A\} \to B$ implies that the first argument to $t$ is inferred
by default, and we can override this by writing a $t \{u\}$ application. We may
also implicitly quantify over arguments (in the style of Idris and Haskell), for
example when declaring $\mit{id} : A \to A$ with the assumption that $A$ is
universally quantified.
\end{notation}

\subsection{Rules of 2LTT}

\subsubsection{Universes}
We have universes $\U_{i,j}$, where $i \in \{0,1\}$, and $j \in \mbb{N}$.  The
$i$ index denotes stages, where $0$ is the runtime (object-level) stage, and $1$
is the compile time (meta-level) stage. The $j$ index denotes universe sizes in
the usual sense of type theory. We assume Russell-style universes, with
$\U_{i,j} : \U_{i, j+1}$. However, for the sake of brevity we will usually omit
the $j$ indices in this section, as sizing is orthogonal to our use-cases and
examples.
\begin{itemize}
\item $\U_0$ can be viewed as the \emph{universe of object-level or runtime types}.
    Each closed type $A : \U_0$ can be staged to an actual type in the object language
    (the language of the staging output).
  \item $\U_1$ can be viewed as the \emph{universe of meta-level or static types}. If we
    have $A : \U_1$, then $A$ is guaranteed to be only present at compile time,
    and will be staged away. Elements of $A$ are likewise computed away.
\end{itemize}

\subsubsection{Type Formers} $\U_0$ and $\U_1$ may be closed under arbitrary type formers,
such as functions, $\Sigma$-types, identity types or inductive types in general.
However, all constructors and eliminators in type formers must stay at the same
stage. For example:
\begin{itemize}
  \item Function domain and codomain types must be at the same stage.
  \item If we have $\Nat_0 : \U_0$ for the runtime type of natural numbers,
        we can only map from it to a type in $\U_0$ by recursion or induction.
\end{itemize}
It is not required that we have the \emph{same} type formers at both stages. We
will discuss variations of the object-level languages in Section \ref{sec:variations}.

\subsubsection{Moving Between Stages}
At this point, our system is rather limited, since there is no interaction
between the stages. We enable such interaction through the following operations.
Note that none of these operations can be expressed as functions, since function
types cannot cross between stages.
\begin{itemize}
\item \emph{Lifting:} for $A : \U_0$, we have $\Lift A : \U_1$.  From the
  staging point of view, $\Lift A$ is the type of metaprograms which compute
  runtime expressions of type $A$.
\item \emph{Quoting:} for $A : \U_0$ and $t : A$, we have $\qut{t} :\,\Lift A$.
  A quoted term $\qut{t}$ represents the metaprogram which immediately yields
  $t$.
\item \emph{Splicing:} for $A : \U_0$ and $t :\,\Lift A$, we have $\spl t : A$.
  During staging, the metaprogram in the splice is executed, and the resulting
  expression is inserted into the output.

  \begin{notation} Splicing binds stronger than any operation, including function
    application. For instance, $\spl f\,x$ is parsed as $(\spl f)\,x$. We borrow
    this notation from MetaML \cite{metaml}.
  \end{notation}

\item Quoting and splicing are definitional inverses, i.e.\ we have $\spl\qut{t} = t$ and
  $\qut{\spl t}=t$ as definitional equalities.
\end{itemize}

Informally, if we have a closed program $t : A$ with $A : \U_0$, \emph{staging}
means computing all metaprograms and recursively replacing all splices in $t$ and $A$
with the resulting runtime expressions. The rules of 2LTT ensure that this is possible,
and we always get a splice-free object program after staging.

\paragraph{Remark}
Why do we use the index $0$ for the runtime stage? The reason is that it is not
difficult to generalize 2LTT to multi-level type theory, by allowing to lift
types from $\U_i$ to $\U_{i+1}$. In the semantics, this can be modeled by
having a 2LTT whose object theory is once again a 2LTT, and doing this in an
iterated fashion. But there must be necessarily a bottom-most object theory;
hence our stage indexing scheme. For now though, we leave the multi-level
generalization to future work.

\begin{notation}\label{notation:basics} We may disambiguate type formers at different stages by using $0$ or $1$
subscripts. For example, $\Nat_1 : \U_1$ is distinguished from $\Nat_0 : \U_0$,
and likewise we may write $\zero_0 : \Nat_0$ and so on. For function and
$\Sigma$ types, the stage is usually easy to infer, so we do not annotate
them. For example, the type $\Nat_0 \to \Nat_0$ must be at the runtime stage,
since the domain and codomain types are at that stage, and we know that the
function type former stays within a single stage. We may also omit stage annotations
from $\lambda$ and pairing.
\end{notation}

\subsection{Staged Programming in 2LTT}\label{sec:staged-programming}

In 2LTT, we may have several different polymorphic identity functions. First,
consider the usual identity function at each stage:
\begin{alignat*}{3}
  & \mit{id}_0 : (A : \U_0) \to A \to A\hspace{2em} && \mit{id}_1 : (A : \U_1) \to A \to A\\
  & \mit{id}_0 := \lambda\,A\,x.x       && \mit{id}_1 := \lambda\,A\,x.x
\end{alignat*}
An $\mit{id}_0$ application will simply appear in staging output as it is. In
contrast, $\mit{id}_1$ can be used as a compile-time evaluated function, because
the staging operations allow us to freely apply $\mit{id}_1$ to runtime
arguments. For example, $\mit{id}_1\,(\Lift\,\Bool_0)\,\qut{\true_0}$ has type
$\Lift \Bool_0$, which implies that $\spl(\mit{id}_1\,(\Lift\,\Bool_0)\,\qut{\true_0})$
has type $\Bool_0$. We can stage this expression as follows:
\[
\spl(\mit{id}_1\,(\Lift\,\Bool_0)\,\qut{\true_0}) = \spl\qut{\true_0} = \true_0
\]
There is another identity function, which computes at compile time, but which
can be only used on runtime arguments:
\begin{alignat*}{3}
  & \mit{id_\Lift} : (A : \Lift\U_0) \to \Lift\,\spl A \to \Lift\,\spl A\\
  & \mit{id_\Lift} := \lambda\,A\,x.x
\end{alignat*}
Note that since $A : \Lift\U_0$, we have $\spl A : \U_0$, hence $\Lift\,\spl A
: \U_1$.  Also, $\Lift\U_0 : \U_1$, so all function domain and codomain types in
the type of $\mit{id_\Lift}$ are at the same stage. Now, we may write
$\spl(\mit{id_\Lift}\,\qut{\Bool_0}\,\qut{\true_0})$ for a term which is staged
to $\true_0$. In this specific case $\mit{id_\Lift}$ has no practical advantage
over $\mit{id}_1$, but in some cases we really have to quantify over
$\Lift\U_0$. This brings us to the next example.

Assume $\List_0 : \U_0 \to \U_0$ with $\nil_0 : (A : \U_0) \to \List_0\,A$,
$\cons_0 : (A : \U_0) \to A \to \List_0\,A$ and $\msf{foldr}_0 : (A\,B : \U_0)
\to (A \to B \to B) \to B \to \List_0\,A \to B$. We define a mapping function which
inlines its function argument:
\begin{alignat*}{3}
  & \mit{map} : (A\,B : \Lift\U_0) \to (\Lift\,\spl A \to \Lift\,\spl B)
      \to \Lift(\List_0\,\spl A) \to \Lift(\List_0\,\spl B)\\
  & \mit{map} := \lambda\,A\,B\,f\,\mit{as}.\,
      \qut{\msf{foldr}_0\,
        \spl A\,(\List_0 \spl B)\,
        (\lambda\,a\,\mit{bs}.\, \cons_0\,\spl B\,\spl(f\,\qut{a})\,\mit{bs})\,
        (\nil_0\,\spl B)\,
        \spl as
        }
\end{alignat*}
This $\mit{map}$ function can be defined with quantification over $\Lift \U_0$ but not
over $\U_1$, because $\List_0$ expects type parameters in $\U_0$, and there is
no generic way to convert from $\U_1$ to $\U_0$. Now, assuming
$\blank\!+_0\!\blank : \Nat_0 \to \Nat_0 \to \Nat_0$ and $\mit{ns} :
\List_0\,\Nat_0$, we have the following staging behavior:
\begin{alignat*}{3}
  & &&\spl(\mit{map}\,\qut{\Nat_0}\,\qut{\Nat_0}\,(\lambda\,n.\,\qut{\spl n +_0 10})\,\qut{\mit{ns}})\\
  & =\hspace{0.3em}&&\spl\qut{\msf{foldr}_0\,
        \spl \qut{\Nat_0}\,(\List_0\,\spl\qut{\Nat_0})\\
  & && \hspace{3.5em}(\lambda\,a\,\mit{bs}.\, \cons_0\,\spl\qut{\Nat_0}\,\spl\qut{\spl\qut{a} +_0 10}\,\mit{bs})\,
        (\nil_0\,\spl \qut{\Nat_0})\,
        \spl\qut{\mit{ns}}}\\
  & =\hspace{0.3em} &&\msf{foldr_0}\,\Nat_0\,(\List_0\,\Nat_0)\,\,(\lambda\,a\,bs.\,\cons_0\,\Nat_0\,(a +_0 10)\,bs)\,(\nil_0\,\Nat_0)\,\mit{ns}
\end{alignat*}

By using meta-level functions and lifted types, we already have control over
inlining. However, if we want to do more complicated meta-level computation, it
is convenient to use recursion or induction on meta-level type formers. A
classic example in staged compilation is the power function for natural numbers,
which evaluates the exponent at compile time. We assume the iterator function
$\msf{iter_1} : \{A : \U_1\} \to \Nat_1 \to (A \to A) \to A \to A$, and runtime
multiplication as $\blank\!*_0\!\blank$.
\begin{alignat*}{3}
  &\mit{exp} : \Nat_1 \to \Lift \Nat_0 \to \Lift \Nat_0 \\
  &\mit{exp} := \lambda\,x\,y.\,
  \msf{iter}_1\,x\,(\lambda\,n.\,\qut{\spl y *_0 \spl n})\,\qut{1}
\end{alignat*}
Now, $\spl(\mit{exp}\,3\,\qut{n})$ stages to $n *_0 n *_0 n *_0 1$ by the computation rules
of $\msf{iter_1}$ and the staging operations.

We can also stage \emph{types}. Below, we use iteration to compute the type of
vectors with static length, as a nested pair type.
\begin{alignat*}{3}
  &\mit{Vec} : \Nat_1 \to \Lift \U_0 \to \Lift \U_0\\
  &\mit{Vec} := \lambda\,n\,A.\,\msf{iter}_1\,n\,(\lambda\,B.\,\qut{\spl A \times \spl B})\,\qut{\top_0}
\end{alignat*}
With this definition, $\spl(\mit{Vec}\,3\,\qut{\Nat_0})$ stages to $\Nat_0
\times (\Nat_0 \times (\Nat_0 \times \top_0))$. Now, we can use \emph{induction}
on $\Nat_1$ to implement a map function. For readability, we use an Agda-style
pattern matching definition below (instead of the elimination principle).
\begin{alignat*}{6}
  &\hspace{-3em}\rlap{$\mit{map} : (n : \Nat_1) \to (\Lift\,\spl A \to \Lift\,\spl B) \to \Lift(\mit{Vec}\,n\,A) \to \Lift(\mit{Vec}\,n\,B)$}\\
  &\hspace{-3em}\mit{map}\,\,&&\zero_1      && f\,&&\mit{as} := \qut{\tt_0}\\
  &\hspace{-3em}\mit{map}\,\,&&(\suc_1\,n)\,&& f\,&&\mit{as} :=
     \qut{(\spl(f\,\qut{\fst_0\,\spl\mit{as}}),\,\spl(\mit{map}\,n\,f\,\qut{\snd_0\,\spl \mit{as}}))}
\end{alignat*}
This definition inlines the mapping function for each projected element of the
vector. For instance, staging $\spl(\mit{map}\,2\,(\lambda\,n.\,\qut{\spl n +_0
  10})\,\qut{\mit{ns}})$ yields $(\fst_0\,\mit{ns} +_0
10,\,(\fst_0(\snd_0\,\mit{ns}) +_0 10,\,\tt_0))$. Sometimes, we do not want to
duplicate the code of the mapping function. In such cases, we can use
\emph{let-insertion}, a common technique in staged compilation. If we bind a runtime
expression to a runtime variable, and only use that variable in subsequent
staging, only the variable itself can be duplicated. One solution is to do an
ad-hoc let-insertion:
\begin{alignat*}{6}
  &   && \mylet_0\,f := \lambda\,n.\, n +_0 10\,\,\msf{in}\,\,
         \spl(\mit{map}\,2\,(\lambda\,n.\,\qut{f\,\spl n})\,\qut{\mit{ns}}) \\
  & =\,\,&&  \mylet_0\,f := \lambda\,n.\, n +_0 10\,\,\msf{in}\,\,
          (f\,(\fst_0\,\mit{ns}),\,(f\,(\fst_0(\snd_0\,\mit{ns})),\,\tt_0))
\end{alignat*}
We include examples of more sophisticated let-insertion in the supplementary
code \cite{staged-demo}.

More generally, we are free to use dependent types at the meta-level, so we can
reproduce more complicated staging examples. Any well-typed interpreter can be
rephrased as a \emph{partial evaluator}, as long as we have sufficient type
formers. For instance, we may write a partial evaluator for a simply typed
lambda calculus. We sketch the implementation in the following; the full version
can be found in the supplementary code. First, we inductively define types,
contexts and terms:
\begin{alignat*}{6}
  & \Ty : \U_1  \hspace{2em} \Con : \U_1 \hspace{2em} \Tm : \Con \to \Ty \to \U_1
\end{alignat*}
Then we define the interpretation functions:
\begin{alignat*}{6}
  & \msf{EvalTy}  &&: \Ty \to \Lift \U_0 \\
  & \msf{EvalCon} &&: \Con \to \U_1 \\
  & \msf{EvalTm}  &&: \Tm\,\Gamma\,A \to \msf{EvalCon}\,\Gamma \to \Lift\,\spl(\msf{EvalTy}\,A)
\end{alignat*}
Types are necessarily computed to runtime types, e.g.\ an embedded
representation of the natural number type is evaluated to $\qut{\Nat_0}$.
Contexts are computed as follows:
\begin{alignat*}{4}
  &\msf{EvalCon}\,\,\msf{empty}                &&:= \top_1 \\
  &\msf{EvalCon}\,\,(\msf{extend}\,\Gamma\,A) &&:= \msf{EvalCon}\,\Gamma \times (\Lift \spl(\msf{EvalTy}\,A))
\end{alignat*}
This is an example for the usage of \emph{partially static data} \cite{partial-evaluation}:
semantic contexts are \emph{static} lists storing \emph{runtime}
expressions. This allows us to completely eliminate environment lookups in the
staging output: an embedded lambda expression is staged to the corresponding
lambda expression in the object language. This is similar to the partial
evaluator presented in Idris 1 \cite{scrap-your-inefficient-engine}. However, in
contrast to 2LTT, Idris 1 does not provide a formal guarantee that partial
evaluation does not get stuck.

\subsection{Properties of Lifting, Binding Time Improvements}

We describe more generally the action of lifting on type formers. While lifting
does not have definitional computation rules, it does preserve negative type formers up
to definitional isomorphism \cite[Section~2.3]{twolevel}:
\begin{alignat*}{5}
  \Lift((x : A) \to B\,x) &\simeq ((x : \Lift A) \to \Lift\,(B\,\spl x))\\
  \Lift ((x : A) \times B\,x) &\simeq ((x : \Lift A) \times \Lift (B\,\spl x))\\
  \Lift \top_0 &\simeq \top_1
\end{alignat*}
For function types, the preservation maps are the following:
\begin{alignat*}{5}
  & \hspace{-6em}\rlap{$\mit{pres}_\to : \Lift((x : A) \to B\,x) \to ((x : \Lift A) \to \Lift\,(B\,\spl x))$}\\
  & \hspace{-6em}\mit{pres}_\to\,f     &&:= \lambda\,x.\,\,\qut{\spl f\,\spl x}\\
  & \hspace{-6em}\mit{pres}_\to^{-1}\,f &&:= \qut{\lambda\,x.\,\,\spl(f\,\qut{x})}
\end{alignat*}
With this, we have that $\mit{pres}_\to\,(\mit{pres}_\to^{-1}\,f)$ is
definitionally equal to $f$, and also the other way around. Preservation maps
for $\Sigma$ and $\top$ work analogously.

By rewriting a 2LTT program left-to-right along preservation maps, we perform
what is termed \emph{binding time improvement} in the partial evaluation
literature \cite[Chapter~12]{partial-evaluation}. Note that the output of $\mit{pres}_{\to}$ uses a
meta-level $\lambda$, while going the other way introduces a runtime
binder. Meta-level function types and $\Sigma$-types support more computation during
staging, so in many cases it is beneficial to use the improved forms. In some
cases though we may want to use unimproved forms, to limit the size of generated
code. This is similar to what we have seen with let-insertion. For a minimal
example, consider the following unimproved version of $\mit{id}_\Lift$:
\begin{alignat*}{5}
  & \mit{id}_\Lift : (A : \Lift \U_0) \to \Lift(\spl A \to \spl A) \\
  & \mit{id}_\Lift := \lambda\,A.\,\,\qut{\lambda\,x.\,x}
\end{alignat*}
This can be used at the runtime stage as
$\spl(\mit{id}_\Lift\,\qut{\Bool_0})\,\true_0$, which is staged to
$(\lambda\,x.\,x)\,\true_0$.  This introduces a useless $\beta$-redex, so in
this case the improved version is clearly preferable.

For inductive types in general we do not get full preservation, only maps in one
direction. For example, we have $\Bool_1 \to \Lift \Bool_0$, defined as
$\lambda\,b.\,\msf{if}\,b\,\msf{then}\,\qut{\true_0}\,\msf{then}\,\qut{\false_0}$.
In the staging literature, this is called ``serialization'' or ``lifting''
\cite{template-haskell,metaml}.  In the other direction, we can only define
constant functions from $\Lift \Bool_0$ to $\Bool_1$.

The lack of elimination principles for $\Lift A$ means that we cannot inspect
the internal structure of expressions. We will briefly discuss ways to lift this
restriction in Section \ref{sec:intensional-analysis}.

In particular, we have no serialization map from $\Nat_1 \to \Nat_1$ to
$\Lift(\Nat_0 \to \Nat_0)$. However, when $A : \U_1$ is \emph{finite}, and $B :
\U_1$ can be serialized, then $A \to B$ can be serialized, because it is
equivalent to a finite product. For instance, $\Bool_1 \to \Nat_1 \simeq \Nat_1
\times \Nat_1$.  In 2LTT, $A$ is called \emph{cofibrant}
\cite[Section~3.4]{twolevel}: this means that for each $B$, $A \to \Lift\,B$ is
equivalent to $\Lift C$ for some $C$. This is the 2LTT formalization of the
so-called ``trick'' in partial evaluation, which improves binding times by
$\eta$-expanding functions out of finite sums \cite{eta-expansion-trick}.

\subsubsection{Fusion}
Fusion optimizations can be viewed as binding time improvement techniques for
general inductive types. The basic idea is that by lambda-encoding an inductive
type, it is brought to a form which can be binding-time improved. For instance,
consider foldr-build fusion for lists \cite{short-cut}, which is employed in GHC
Haskell.  Starting from $\Lift (\List_0\,A)$, we use Böhm-Berarducci encoding
\cite{boehm-berarducci} under the lifting to get
\begin{alignat*}{3}
         &\Lift((L : \U_0) \to (A \to L \to L) \to L \to L)\\
  \simeq\,\,&((L : \Lift\,U_0) \to (\Lift\,A \to \Lift\,\spl L \to \Lift\,\spl L) \to \Lift\,\spl L \to \Lift\,\spl L).
\end{alignat*}
Alternatively, for \emph{stream fusion} \cite{stream-fusion}, we embed $\List\,A$ into the
coinductive colists (i.e.\ the possibly infinite lists), and use a terminal
lambda-encoding. The embedding into the ``larger'' structure enables some staged
optimizations which are otherwise not possible, such as fusion for the
$\msf{zip}$ function. However, the price we pay is that converting
back to lists from colists is not necessarily total.

We do not detail the implementation of fusion in 2LTT here; a small example can
be found in the supplementary code. In short, 2LTT is a natural setting for a
wide range of fusion setups. A major advantage of fusion in 2LTT is the formal
guarantee of staging, in contrast to implementations where compile-time
computation relies on ad-hoc user annotations and general-purpose optimization
passes. For instance, fusion in GHC relies on rewrite rules and inlining
annotations which have to be carefully tuned and ordered, and it is possible to
get pessimized code via failed fusion.

\subsubsection{Inferring Staging Operations}
We can extract a coercive subtyping system from the staging operations, which
can be used for inference, and in particular for automatically transporting
definitions along preservation isomorphisms. One choice is to have $A \leq \Lift
A$, $\Lift A \leq A$, a contravariant-covariant rule for functions and a
covariant rule for $\Sigma$. During bidirectional elaboration, when we need to
compare an inferred and an expected type, we can insert coercions.  We
implemented this feature in our prototype. It additionally supports Agda-style
implicit arguments and pattern unification, so it can elaborate the following
definition:
\begin{alignat*}{3}
  & \mit{map} : \{A\,B : \Lift\U_0\} \to (\Lift A \to \Lift B)
      \to \Lift(\List_0\,A) \to \Lift(\List_0\,B)\\
  & \mit{map} := \lambda\,f\,\mit{as}.\,
      \msf{foldr}_0
        (\lambda\,a\,\mit{bs}.\, \cons_0\,(f\,a)\,\mit{bs})\,
        \nil_0\,
        as
\end{alignat*}
We may go a bit further, and also add the coercive subtyping rule $\U_0 \leq
\U_1$, witnessed by $\Lift$. Then, the type of $\mit{map}$ can be written as
$\{A\,B : \Lift\U_0\} \to (A \to B) \to \List_0\,A \to \List_0\,B$. However,
here the elaborator has to make a choice, whether to elaborate to improved or
unimproved types. In this case, the fully unimproved type would be
\[ \{A\,B : \Lift\U_0\} \to \Lift((\spl A \to \spl B) \to \List_0\,\spl A \to \List_0\,\spl B). \]
It seems that improved types are a sensible default, and we can insert explicit
lifting when we want to opt for unimproved types. This is also available in our
prototype.

\subsection{Variations of Object-Level Languages}
\label{sec:variations}

In the following, we consider variations on object-level languages, with a focus
on applications in downstream compilation after staging. Adding restrictions or
more distinctions to the object language can make it easier to optimize and
compile.

\subsubsection{Monomorphization}\label{sec:monomorphization}

In this case, the object language is simply typed, so every type is known
statically. This makes it easy to assign different memory layouts to different
types and generate code accordingly for each type. Moving to 2LTT, we still
want to abstract over runtime types at compile time, so we use the following
setup.
\begin{itemize}
\item We have a \emph{jugdment}, written as $A\,\msf{type_0}$, for well-formed
  runtime types. Runtime types may be closed under simple type formers.
\item We have a type $\Ty_0 : \U_1$ in lieu of the previous $\Lift \U_0$.
\item For each $A\,\msf{type}_0$, we have $\Lift A : \Ty_0$.
\item We have quoting and splicing for types and terms. For types, we send $A\,
  \msf{type}_0$ to $\qut{A} : \Ty_0$. For terms, we send $t : A$ to $\qut{t} :
  \Lift A$.
\end{itemize}
Despite the restriction to simple types at runtime, we can still write arbitrary
higher-rank polymorphic functions in 2LTT, such as a function with type $((A : \Ty_0)
\to \Lift\,\spl A \to \Lift \spl A) \to \Lift\,\Bool_0$. This function can be
only applied to statically known arguments, so the polymorphism can be staged
away. The main restriction that programmers have to keep in mind is that
polymorphic functions cannot be stored inside runtime data types.

%% This is similar to the well-known monomorphization models in the C++ and
%% Rust programming languages. The evident difference is that in 2LTT we can use a
%% full type theory at compile time. Additionally, we will see in Section
%% \ref{sec:intensional-analysis} that this setup is compatible with intensional
%% analysis, the ability to analyze the structure of object types and terms.

\subsubsection{Memory Representation Polymorphism}

This refines monomorphization, so that types are not directly identified with
memory representations, but instead representations are internalized in 2LTT as
a meta-level type, and runtime types are indexed over representations.
\begin{itemize}
\item We have $\Rep : \U_1$ as the type of memory representations. We have
  considerable freedom in the specification of $\Rep$. A simple setup may
  distinguish references from unboxed products, i.e.\ we have $\msf{Ref} : \Rep$
  and $\msf{Prod} : \Rep \to \Rep \to \Rep$, and additionally we may assume
  any desired primitive machine representation as a value of $\Rep$.
\item We have Russell-style $\U_{0,j} : \Rep \to \U_{0, j+1}\,r$, where $r$ is
  some chosen runtime representation for types; usually we would mark types are
  erased. We leave the meta-level $\U_{1,j}$ hierarchy unchanged.
\item We may introduce unboxed $\Sigma$-types and primitive machine types in the
  runtime language. For $r : \Rep$, $r' : \Rep$, $A : \U_{0}\,r$ and $B : A \to
  \U_{0}\,r'$, we may have $(x : A) \times B\,x :
  \U_{0}\,(\msf{Prod\,r\,r'})$. Thus, we have type dependency, but we do not
  have dependency in memory representations.
\end{itemize}
Since $\Rep$ is meta-level, there is no way to abstract over it at runtime, and
during staging all $\Rep$ indices are computed to concrete canonical
representations. This is a way to reconcile dependent types with some amount of
control over memory layouts. The unboxed flavor of $\Sigma$ ends up with a
statically known flat memory representation, computed from the representations
of the fields.

%% In principle, it should be possible to have \emph{dependent memory layouts};
%% this would be a more advanced variation on type-passing polymorphism
%% \cite{intensional-poly}, where layouts of objects may depend on runtime data. For example,
%% length-prefixed flat arrays would not be a primitive notion, but simply defined
%% as $(n : \Nat) \times \msf{Vec}\,n\,A$. However, it seems to be highly
%% challenging to implement runtime systems and precise garbage collection in this
%% setting. See \texttt{sixten} \cite{sixten} for an experimental implementation of
%% dependent memory layouts.

\section{Formal Setup}\label{sec:formal-2ltt}

In this section we describe our approach to formalizing staging and also the
\emph{extraction} of algorithms from the formalization. There is a tension
here:
\begin{itemize}
  \item On one hand, we would like to do formal work at a higher level of
    abstraction, in order to avoid a deluge of technical details (for which
    dependent type theories are infamous).
  \item On the other hand, at some point we need to talk about lower-level
    operational aspects and extracted algorithms.
\end{itemize}
In the following, we describe our approach to interfacing between the different levels
of abstraction.

\subsection{The Algebraic Setting}

When defining staging and proving its soundness and stability, we work
internally in a constructive type theory, using types and terms that are
quotiented by conversion. In this setting, every construction respects syntactic
conversion, which is very convenient technically. Here, we use standard
definitions and techniques from the categorical and algebraic metatheory of type
theories, and reuse prior results which rely on the same style of
formalization. Concretely, we use \emph{categories with families} \cite{cwfs}
which is a commonly used specification of an \emph{explicit substitution
calculus} of types and terms.

In the rest of the paper we will mostly work in this setting, and we will
explicitly note when we switch to a different metatheory. We summarize
the features of this metatheory in the following.

\paragraph{Metatheory}
The metatheory is an intensional type theory with a cumulative universe
hierarchy $\Set_i$, where $i$ is ordinal number such that $i \leq \omega +
1$. The transfinite $i$ levels are a convenience feature for modeling typing
contexts, where sometimes we will need universes large enough to fit types at
arbitrary finite levels. We also assume uniqueness of identity proofs and
function extensionality. We will omit transports along identity proofs in the
paper, for the sake of readability. The missing transports can be in principle
recovered due to Hofmann's conservativity result \cite{hofmann95extensional};
see also \cite{DBLP:conf/cpp/WinterhalterST19}. We assume $\Pi$, $\Sigma$,
$\top$ and $\mbb{N}$ types in all universes, and reuse Notation
\ref{notation:basics}. We also assume certain quotient inductive-inductive
types, to represent the syntax of 2LTT and the object theory. We will detail
these shortly in Sections \ref{sec:2lttmod} and \ref{sec:objmod}.

\subsection{Algorithm Extraction}

In the algebraic setting we can \emph{only} talk about properties which
respect syntactic conversion. However, the main motivation of staged compilation is to
improve performance and code abstraction, but these notions do \emph{not}
respect conversion. For instance, rewriting programs along $\beta$-conversion
can improve performance. Hence, at some point we may want to ``escape'' from
quotients.

Fortunately, our specified meta type theory can be interpreted in
\emph{setoids}, as most recently explained in
\cite{DBLP:journals/pacmpl/PujetT22}. A setoid is a set together with an
equivalence relation on it. We interpret closed types as setoids, and dependent
types as setoid fibrations. For inductive base types such $\Bool$ or $\Nat$, the
equivalence relation is interpreted as identity. Functions are interpreted as
equivalence-preserving functions, and two functions are equivalent if they are
pointwise equivalent. A quotient type $A/R$ is interpreted by extending the
semantic notion of equivalence for $A$ with the given $R$ relation. We will
sketch the interpretation of the 2LTT syntax (given as a quotient
inductive-inductive type) in Section \ref{sec:2lttsetoid}.

In particular, from $f : A/R \to A/R$ we can extract a function which acts on
the underlying set of $A$'s interpretation. Note that an extracted function can
be only viewed as an algorithm if the setoid interpretation is itself defined in
a constructive theory, but that is the case in the cited work
\cite{DBLP:journals/pacmpl/PujetT22}.

\begin{definition} Assume a closed function $f : A \to B$ defined
in the mentioned type theory. The \textbf{extraction} of $f$ is
the underlying function of $f$'s interpretation in the setoid model.
\end{definition}

In other words, we can view quotients as a convenient shorthand for working with
explicit equivalence relations. We can work with quotients when we want to enforce
preservation of equivalence relations, but at any point we have the option to
switch to an external view and look at underlying sets and functions.

%% In other words, we can view the usage of constructive quotients as a convenient
%% shorthand for working with explicit equivalence relations. When we reason about
%% the syntax of a type theory, this provides a bridge between the
%% categorical/algebraic style and the traditional style that uses separate
%% equivalence relations.
%% It makes sense to work with quotients internally to a type theory whenever we
%% only want to talk about constructions and properties which respect
%% conversion. However, at any point we have the option to switch to an external
%% perspective and look at underlying sets and the underlying maps.

\subsection{Models and Syntax of 2LTT}\label{sec:2lttmod}

We specify the notion of model for 2LTT in this section. The syntax of 2LTT will
be defined as the initial model, as a quotient inductive-inductive type,
following \cite{ttintt}. This yields an explicit substitution calculus,
quotiented by definitional equality. A major advantage of this representation is
that it \emph{only} includes well-typed syntax, so we never have to talk about
``raw'' types and terms.

We provide an appendix to this paper, where we present a listing of all rules
and equations of 2LTT in two styles: first, in a more traditional style with
derivation rules, and also in a very compact manner using a second-order
algebraic signature.

First, we define the structural scaffolding of 2LTT without type formers, mostly
following \cite{twolevel}. The specification is agnostic about the sizes of sets
involved, i.e.\ we can model underlying sorts using any metatheoretic $\Set_j$.

\begin{definition} A model of \textbf{basic 2LTT} consists of the following.
\begin{itemize}
\item
  A category $\mbbc$ with a terminal object. We denote the set of objects as
  $\Con_{\mbbc} : \Set$ and use capital Greek letters starting from $\Gamma$ to
  refer to objects. The set of morphisms is $\Sub_{\mbbc} : \Con_{\mbbc} \to
  \Con_{\mbbc} \to \Set$, and we use $\sigma$, $\delta$ and so on to refer to
  morphisms. The terminal object is written as $\emptycon$ with unique morphism
  $\epsilon : \Sub_\mbbc\,\Gamma\,\emptycon$. We omit the $\mbbc$ subscript if
  it is clear from context.

  In the initial model, an element of $\Con$ is a \emph{typing context}, an
  element of $\Sub$ is a \emph{parallel substitution}, $\emptycon$ is the
  \emph{empty typing context} and $\epsilon$ is the empty list viewed as a
  parallel substitution. The identity substitution $\id : \Sub\,\Gamma\,\Gamma$
  maps each variable to itself, and $\sigma \circ \delta$ can be computed
  by substituting each term in $\sigma$ (which is a list of terms) with $\delta$.
\item
  For each $i \in \{0,1\}$ and $j \in \mbb{N}$, we have $\Ty_{i,j} : \Con \to
  \Set$ and $\Tm_{i,j} : (\Gamma : \Con) \to \Ty_{i,j}\,\Gamma \to \Set$, as
  sets of types and terms. Both types and terms can be substituted:
  \begin{alignat*}{3}
    & \blank\![\blank\!] &&: \Ty_{i,j}\,\Delta \to \Sub\,\Gamma\,\Delta \to \Ty_{i,j}\,\Gamma \\
    & \blank\![\blank\!] &&: \Tm_{i,j}\,\Delta\,A \to (\sigma : \Sub\,\Gamma\,\Delta) \to \Tm_{i,j}\,\Gamma\,(A[\sigma])
  \end{alignat*}
  Additionally, we have $A[\id] = A$ and $A[\sigma \circ \delta] =
  A[\sigma][\delta]$, and we have the same equations for term substitution as
  well.

  We also have a \emph{comprehension structure}: for each $\Gamma : \Con$
  and $A : \Ty_{i,j}\,\Gamma$, we have the extended context $\Gamma \ext A :
  \Con$ such that there is a natural isomorphism $\Sub\,\Gamma\,(\Delta\ext A)
  \simeq (\sigma : \Sub\,\Gamma\,\Delta) \times
  \Tm_{i,j}\,\Gamma\,(A[\sigma])$. We will sometimes use
  $\blank\!\ext_0\!\blank$ and $\blank\!\ext_1\!\blank$ to disambiguate
  object-level and meta-level context extensions.
\item
  For each $j$ we have a \emph{lifting structure}, consisting of a natural
  transformation $\Lift : \Ty_{0,j}\,\Gamma \to \Ty_{1,j}\,\Gamma$, and an
  invertible natural transformation $\qut{\blank} : \Tm_{0,j}\,\Gamma\,A \to
  \Tm_{1,j}\,\Gamma\,(\Lift A)$, with inverse $\spl\blank$.
\end{itemize}
\end{definition}

%% In short, for each $i$ and $j$ we have a family structure in the sense of
%% categories-with-families \cite{Dybjer96internaltype}, such that there is a
%% natural transformation from each $(\Ty_{0,j},\,\Tm_{0,j})$ to $(\Ty_{1,j},\,\Tm_{1,j})$
%% which is invertible on terms.

The following notions are derivable:
\begin{itemize}
\item
  By moving left-to-right along $\Sub\,\Gamma\,(\Delta\ext A) \simeq (\sigma :
  \Sub\,\Gamma\,\Delta) \times \Tm_{i,j}\,\Gamma\,(A[\sigma])$,
  and starting from the identity morphism $\id : \Sub\,(\Gamma\ext A)\,(\Gamma\ext A)$, we recover
  the \emph{weakening substitution} $\p : \Sub\,(\Gamma\ext A)\,\Gamma$ and the \emph{zero variable}
  $\q : \Tm_{i,j}\,(\Gamma\ext A)\,(A[\p])$.
\item
  By weakening $\q$, we recover a notion of variables as De Bruijn
  indices. In general, the $n$-th De Bruijn index is defined as $\q[\p^{n}]$,
  where $\p^{n}$ denotes $n$-fold composition of $p$.
\item
  By moving right-to-left along $\Sub\,\Gamma\,(\Delta\ext A) \simeq (\sigma :
  \Sub\,\Gamma\,\Delta) \times \Tm_{i,j}\,\Gamma\,(A[\sigma])$, we recover the
  operation which extends a morphism with a term. In the initial model, this
  extends a parallel substitution with an extra term, thus justifying the view
  of substitutions as lists of terms. We denote the extension operation as
  $(\sigma,\,t)$ for $\sigma : \Sub\,\Gamma\,\Delta$ and $t : \Tm_{i,j}\,\Gamma\,(A[\sigma])$.
\end{itemize}

\begin{notation}
De Bruijn indices are rather hard to read, so we will sometimes use nameful notation
for binders and substitutions. For example, we may write $\Gamma \ext (x : A)
\ext (y : B)$ for a context, and subsequently write $B[x \mapsto t]$ for
substituting the $x$ variable for some term $t : \Tm_{i,j}\,\Gamma\,A$. Using
nameless notation, we would instead have $B : \Ty_{i,j}\,(\Gamma \ext A)$ and
$B[\id,\,t] : \Ty_{i,j}\,\Gamma$; here we recover single substitution by extending the identity
substitution $\id : \Sub\,\Gamma\,\Gamma$ with $t$.

We may also leave weakening implicit: if a type or term is in a context
$\Gamma$, we may use it in an extended context $\Gamma \ext A$ without marking
the weakening substitution.
\end{notation}

\begin{definition}
A \textbf{model of 2LTT} is a model of basic 2LTT which supports certain type
formers. For the sake of brevity, we specify here only a small collection of
type formers. However, we will argue later that our results extend to more
general notions of inductive types. We specify type formers in the following. We
omit substitution rules for type and term formers; all of these can be specified in
an evident structural way, e.g.\ as in $(\Lift A)[\sigma] = \Lift(A[\sigma])$.
\begin{itemize}
\item \emph{Universes}. For each $i$ and $j$, we have a Coquand-style universe
  \cite{coquand2018canonicity} in $\Ty_{i,j}$. This consists of $\U_{i,j} : \Ty_{i,j+1}\,\Gamma$,
  together with $\El : \Tm_{i,j+1}\,\Gamma\,\U_{i,j} \to \Ty_{i,j}\,\Gamma$ and $\Code$, where $\Code$
  and $\El$ are inverses.
\item \emph{$\Sigma$-types}. We have $\Sigma\,(x : A)\,B : \Ty_{i,j}\,\Gamma$ for $A : \Ty_{i,j}\,\Gamma$
  and $B : \Ty_{i,j}\,(\Gamma \ext (x : A))$, together with a natural isomorphism consisting of pairing and projections:
  \[ \Tm_{i,j}\,\Gamma\,(\Sigma\,(x : A)\,B) \simeq (t : \Tm_{i,j}\,\Gamma\,A) \times \Tm_{i,j}\,\Gamma\,(B[x\mapsto t]) \]
  We write $(t,\,u)$ for pairing and $\fst$ and $\snd$ for projections.
\item \emph{Function types}.
  We have $\Pi\,(x : A)\,B : \Ty_{i,j}\,\Gamma$ for $A : \Ty_{i,j}\,\Gamma$ and
  $B : \Ty_{i,j}\,(\Gamma \ext (x : A))$, together with abstraction and application
  \begin{alignat*}{3}
    & \lam &&: \Tm_{i,j}\,(\Gamma \ext (x : A))\,B \to \Tm_{i,j}\,\Gamma\,(\Pi\,(x : A)\,B) \\
    & \app &&: \Tm_{i,j}\,\Gamma\,(\Pi\,(x : A)\,B) \to \Tm_{i,j}\,(\Gamma \ext (x : A))\,B
  \end{alignat*}
  such that $\lam$ and $\app$ are inverses. Since we have explicit
  substitutions, this specification of $\app$ is equivalent to the
  ``traditional'' one:
  \[ \app' : (t : \Tm_{i,j}\,\Gamma\,(\Pi\,(x :
  A)\,B))\to (u : \Tm_{i,j}\,\Gamma\,A) \to \Tm_{i,j}\,\Gamma\,(B[x \mapsto
    u]).\]
  The traditional version is definable as $(\app\,t)[\id,\,u]$. We use
  the $\app$-$\lam$ isomorphism because it is formally more convenient.

\item \emph{Natural numbers}. We have $\Nat_{i,j} : \Ty_{i,j}\,\Gamma$, $\zero_{i,j} : \Tm_{i,j}\,\Gamma\,\Nat_{i,j}$,
  and $\suc_{i,j} : \Tm_{i,j}\,\Gamma\,\Nat_{i,j} \to \Tm_{i,j}\,\Gamma\,\Nat_{i,j}$. The eliminator is the following.
  \begin{alignat*}{5}
    & \NatElim :\hspace{0.5em}
                &&  (P &&: \Ty_{i,k}\,(\Gamma \ext (n : \Nat_{i,j}))) \\
    &           &&  (z &&: \Tm_{i,k}\,\Gamma\,(P[n\mapsto \zero_{i,j}])) \\
    &           &&  (s &&: \Tm_{i,k}\,(\Gamma \ext (n : \Nat_{i,j}) \ext (\mit{pn} : P[n\mapsto n]))\,(P[n\mapsto \suc_{i,j}\,n]))\\
    &           &&  (t &&: \Tm_{i,j}\,\Gamma\,\Nat_{i,j}) \\
    &\hspace{3em}\to        &&&& \Tm_{i,k}\,\Gamma\,(P[n\mapsto t]))
  \end{alignat*}
  We also have the $\beta$-rules:
  \begin{alignat*}{5}
    & \NatElim\,P\,z\,s\,\zero_{i,j}     &&= z \\
    & \NatElim\,P\,z\,s\,(\suc_{i,j}\,t) &&= s[n \mapsto t,\,\mit{pn} \mapsto \NatElim\,P\,z\,s\,t]
  \end{alignat*}
  Note that we can eliminate from a level $j$ to any level $k$.
\end{itemize}
\end{definition}

\begin{definition}
The \textbf{syntax of 2LTT} is the initial model of 2LTT, defined as a quotient
inductive-inductive type \cite{kaposi2019constructing}. The signature for this
type includes four sorts ($\Con$, $\Sub$, $\Ty$, $\Tm$), constructors for
contexts, explicit substitutions and type/term formers, and equality
constructors for all specified equations. The syntax comes with an
\emph{induction principle}, from which we can also derive the \emph{recursion principle} as a
non-dependent special case, witnessing the initiality of the syntax.
\end{definition}

\subsubsection{Comparison to Annekov et al.}
Comparing our models to the primary reference on 2LTT \cite{twolevel}, the main
difference is the handling of ``sizing'' levels. In ibid.\ there is a cumulative
lifting from $\Ty_{i,j}$ to $\Ty_{i,j+1}$, which we do not assume. Instead,
we allow elimination from $\Nat_{i,j}$ into any $k$ level. This means that we
can manually define lifting maps from $\Nat_{i,j}$ to $\Nat_{i,j+1}$ by
elimination. This is more similar to e.g.\ Agda, where we do not have cumulativity,
but we can define explicit lifting from $\Nat_j : \Set_j$ to $\Nat_k : \Set_k$.

In ibid., ``two-level type theory'' specifically refers to the setup
where the object level is a homotopy type theory and the meta level is an
extensional type theory. In contrast, we allow a wider range of setups under the
2LTT umbrella. Ibid.\ also considers a range of additional
strengthenings and extensions of 2LTT \cite[Section~2.4]{twolevel}, most of which
are useful in synthetic homotopy theory. We do not assume any of these and
stick to the most basic formulation of 2LTT.

\subsubsection{Elaborating Informal Syntax}
The justification for the usage of the informal syntax in Section
\ref{sec:tour-of-2ltt} is \emph{elaboration}. Elaboration translates surface
notation to the formal syntax (often termed ``core syntax'' in the
implementation of dependently typed programming languages). Elaboration is
partial: it may throw error on invalid input. Elaboration also requires
\emph{decidability of conversion} for the formal syntax, since type checking
requires comparing types for conversion, which in turn requires comparing terms.
We do not detail elaboration in this paper, we only make some observations.
\begin{itemize}
\item Decidability of conversion can be shown through normalization. Since 2LTT
      is a minor variation on ordinary MLTT, we expect that existing proofs for MLTT
      can be adapted without issue \cite{coquand2018canonicity,decidableconv}.
\item We need to elaborate the surface-level Russell-style universe notation to
      Coquand-style universes with explicit $\El$ and $\Code$ annotations. It is
      likely that this can be achieved with the use of bidirectional elaboration
      \cite{DBLP:journals/csur/DunfieldK21}, using two directions (checking and
      inference) both for types and terms, and inserting $\El$ and $\Code$ when
      switching between types and terms.
\item The formal complexity of elaboration varies wildly, depending on how
      explicit the surface syntax is. For example, Agda-style implicit arguments
      require metavariables and unification, which dramatically increases the
      difficulty of formalization.
\end{itemize}

\subsubsection{The Setoid Interpretation of the Syntax}\label{sec:2lttsetoid} We need to show
that the 2LTT syntax, as a particular inductive type, can be modeled in setoids.
Unfortunately, there is no prior work which presents the setoid interpretation
of quotient types in the generality of quotient induction-induction. We sketch
it for our specific use case, but we argue that the interpretation is fairly
mechanical and could be adapted to any quotient inductive-inductive type.
We focus on contexts and types here.

We mutually inductively define $\Con : \Set$, $\Con^{\sim} : \Con \to \Con \to \Prop$, $\Ty_{i,j} : \Con \to \Set$
      and $\Ty^{\sim} : \Ty_{i,j}\,\Gamma \to \Ty_{i,j}\,\Delta \to \Con^{\sim}\,\Gamma\,\Delta \to \Prop$, where
      $\Con^{\sim}$ and $\Ty^{\sim}$ are proof-irrelevant relations. Note that $\Ty^{\sim}$ is \emph{heterogeneous},
      since we can relate types in different (but provably convertible) contexts.

      We specify $\emptycon$ and $\blank\ext\blank$ in $\Con$. To $\Con^{\sim}$ we add that $\emptycon$ and $\blank\ext\blank$ are congruences, and also that $\Con^{\sim}$ is reflexive, symmetric and transitive.

We add all type formation rules to $\Ty_{i,j}$, and add all quotient equations to
$\Ty^{\sim}$ as homogeneously indexed rules. For example, the substitution rule
for $\Nat_0$ is specified as $ \Nat_0[] :
\Ty^{\sim}\,(\Nat_0[\sigma])\,\Nat_0\,\refl^{\sim} $, where $\sigma :
\Sub\,\Gamma\,\Delta$ and $\refl^{\sim} : \Con^{\sim}\,\Gamma\,\Gamma$. We also
add all congruences and reflexivity, symmetry and transitivity to $\Ty^\sim$.

We add a \emph{coercion} rule to types, as $\msf{coerce} :
\Con^{\sim}\,\Gamma\,\Delta \to \Ty_{i,j}\,\Gamma \to \Ty_{i,j}\,\Delta$. This is probably
familiar from the traditional specifications of type theories that use
conversion relations. However, we additionally need a \emph{coherence rule}
which witnesses $\Ty^{\sim}\,A\,(\msf{coerce}\,p\,A)\,p$ for each $A :
\Ty_{i,j}\,\Gamma$ and $p : \Con^{\sim}\,\Gamma\,\Delta$. In short, coherence
says that coercion preserves conversion.

The concise description of this setup is that $(\Con,\,\Con^{\sim})$ is a
setoid, and $(\Ty_{i,j},\,\Ty^{\sim})$ is a \emph{setoid fibration} over
$(\Con,\,\Con^{\sim})$, for each $i$ and $j$. This notion of setoid fibration is
equivalent to the ordinary notion of fibration, when setoids are viewed as
categories with proof-irrelevant invertible morphisms. This is appropriate,
since types in general are interpreted as setoid fibrations in the setoid model
of a type theory.

This scheme generalizes to substitutions and terms in a mechanical way. $\Sub$
is indexed over two contexts, so it is interpreted as a fibration over
the setoid product of $\Con$ and $\Con$. $\Tm$ is indexed over $(\Gamma :
\Con)\times \Ty_{i,j}\,\Gamma$, so it is interpreted as a fibration over a setoid
$\Sigma$-type. In every conversion relation we take the congruence closure of the specified
quotient equations. Finally, we can use the induction principle for the thus
specified sets and relations to interpret the internal elimination principle for
the 2LTT syntax.

%% We need to show that
%% the 2LTT syntax, as a particular inductive type, can be modeled in setoids, if

%% The examples in Section \ref{sec:tour-of-2ltt} can be faithfully represented in
%% the formal syntax. The only notable difference is that the formal syntax uses
%% Coquand-style universes instead of Russell-style ones. We use the former because
%% it is much easier to model in the semantics in Section \ref{sec:staging-algorithm}. We note
%% though that it is straightforward to elaborate from a Russell-style surface
%% syntax to the formal version, by inserting $\El$ and $\Code$ as required.

\subsection{Models and Syntax of the Object Theory}\label{sec:objmod}

We also need to specify the object theory, which serves as the output language
of staging. In general, the object theory corresponding to a particular flavor
of 2LTT is simply the type theory that supports only the object-level $\Ty_{0,j}$
hierarchy with its type formers.

\begin{definition}
A \textbf{model of the object theory} is a category-with-families, with types and
terms indexed over $j \in \mbb{N}$, supporting Coquand-style universes $\U_j$,
type formers $\Pi$, $\Sigma$ and $\Nat_j$, with elimination from $\Nat_j$ to
any level $k$.
\end{definition}

\begin{definition}
Like before, the \textbf{syntax of the object theory} is the initial model,
given as a quotient inductive-inductive type, and it can be also interpreted
in setoids.
\end{definition}

\begin{notation}
From now on, by default we use $\Con$, $\Sub$, $\Ty$ and $\Tm$ to refer to sets
in the syntax of 2LTT. We use $\mbbo$ to refer to the object syntax, and
$\Con\ob$, $\Sub\ob$, $\Ty\ob$ and $\Tm\ob$ to refer to its underlying sets.
\end{notation}

\begin{definition}
We recursively define the \textbf{embedding} of object syntax into 2LTT syntax,
which preserves all structure strictly and consists of the following functions:
\begin{alignat*}{4}
  & \emb{\blank} : \Con\ob \to \Con  && \emb{\blank} : \Sub\ob\,\Gamma\,\Delta \to \Sub\,\emb{\Gamma}\,\emb{\Delta}\\
  & \emb{\blank} : \Ty_{\mbbo j}\,\Gamma \to \Ty_{0,j}\,\emb{\Gamma}\hspace{2em} && \emb{\blank} : \Tm_{\mbbo j}\,\Gamma\,A \to \Tm_{0,j}\,\emb{\Gamma}\,\emb{A}
\end{alignat*}
\end{definition}

\section{The Staging Algorithm}\label{sec:staging-algorithm}

In this section we specify what we mean by a staging algorithm, then proceed to
define one.

\begin{definition}\label{def:staging}
A \textbf{staging algorithm} consists of two functions:
  \begin{alignat*}{3}
    & \Stage : \Ty_{0,j}\,\emb{\Gamma} \to \Ty_{\mbbo j}\,\Gamma\hspace{2em} \Stage : \Tm_{0,j}\,\emb{\Gamma}\,A \to \Tm_{\mbbo j}\,\Gamma\,(\Stage\,A)
  \end{alignat*}
Note that we can stage open types and terms as long as their contexts are purely
object-level. By \emph{closed staging} we mean staging only for closed types and
terms.
\end{definition}
\begin{definition}
  The following properties are of interest when considering a staging algorithm:
  \begin{enumerate}
  \item \emph{Soundness:} $\emb{\Stage\,A} = A$ and $\emb{\Stage\,t} = t$.
  \item \emph{Stability:} $\Stage\,\emb{A} = A$ and $\Stage\,\emb{t} = t$.
  \item \emph{Strictness:} the function extracted from $\Stage$ preserves all
    type and term formers strictly (not just up to conversion).
  \end{enumerate}
\end{definition}

Soundness and stability together state that embedding is a bijection
on types and terms (up to conversion). This is a statement of
\emph{conservativity} of 2LTT over the object theory. In \cite{twolevel} a
significantly weaker conservativity theorem is shown, which only expresses that
there exists a function from $\Tm_{0,j}\,\emb{\Gamma}\,\emb{A}$ to $\Tm_{\mbbo
  j}\,\Gamma\,A$.

Strictness tells us that staging does not perform any computation in the object
theory; e.g.\ it does not perform $\beta$-reduction. This is practically quite
important; for instance, \emph{full normalization of 2LTT} is a sound and stable
staging algorithm, but it is not strict, and it is practically not useful, since
it does not provide any control over inlining and $\beta$-reduction in code
generation.

Note that staging necessarily preserves type and term formers up to conversion,
since it is defined on quotiented types and terms. More generally, staging
respects conversion because every function must respect conversion in the
meta type theory.

\subsection{The Presheaf Model}

We review the presheaf model described in \cite[Section~2.5.3]{twolevel}.  This
will directly yield a closed staging algorithm, by recursive evaluation of 2LTT types and
terms in the model.

We give a higher-level overview first. Presheaves can be viewed as a
generalization of sets, as ``variable sets'' which may vary over a base
category. A set-based model of 2LTT yields a standard interpreter, where every
syntactic object is mapped to the obvious semantic counterpart. The presheaf
model over the category of contexts and substitutions in $\mbbo$ generalizes
this to an interpreter where semantic values may depend on object-theoretic
contexts. This allows us to handle object-theoretic \emph{types and terms}
during interpretation, since these necessarily depend on their contexts.

In the presheaf model, every semantic construction must be \emph{stable} under
object-level substitution, or using alternative terminology, must be
\emph{natural} with respect to substitution. Semantic values can be viewed as
runtime objects during interpretation, and naturality means that
object-theoretic substitutions can be \emph{applied} to such runtime objects,
acting on embedded object-theoretic types and terms. Stability underlies
the core trade-off in staging:
\begin{itemize}
\item On one hand, every syntactic rule and construction in 2LTT must
      be stable, which restricts the range of metaprograms
      that can be written in 2LTT. For example, we cannot make decisions
      based on sizes of the object-theoretic scopes, since these sizes may be changed
      by substitution.
\item On the other hand, if only stable constructions are possible, we never
      have to prove stability, or even explicitly mention object-theoretic
      contexts, substitutions and variables. The tedious details of deep embeddings
      can be replaced with a handful of staging operations and typing rules.
\end{itemize}
In principle, the base category in a 2LTT can be any category with finite
products; it is not required that it is a category of contexts and substitutions.
If the base category has more structure, we get more interesting object
theories, but at the same time stability becomes more onerous, because we must
be stable under \emph{more} morphisms. If the base category is simpler, with
fewer morphisms, then the requirement of stability is less restrictive. In
Section \ref{sec:intensional-analysis}, we will look at a concrete example for this.

We proceed to summarize the key components of the model in a bit more detail. We
denote the components of the model by putting hats on 2LTT components, e.g.\ as
in $\hCon$.

\begin{notation}
In this section, we switch to naming elements of $\Cono$ as $a$, $b$ and $c$,
and elements of $\Subo$ as $f$, $g$, and $h$, to avoid name clashing with
contexts and substitutions in the presheaf model.
\end{notation}

\subsubsection{The Syntactic Category and the Meta-Level Fragment}

\begin{definition} $\bs{\wh{\Con} : \Set_{\omega+1}}$ is defined as the set of presheaves
over $\mbbo$. $\Gamma : \wh{\Con}$ has an action on objects $|\Gamma| :
\Cono \to \Set_\omega$ and an action on morphisms $\blank[\blank] : |\Gamma|\,b
\to \Subo\,a\,b \to |\Gamma|\,a$, such that $\gamma[\id] = \gamma$ and
$\gamma[f\circ g] = \gamma[f][g]$.

\begin{notation}
Above, we reused the substitution notation $\blank[\blank]$ for the action on
morphisms.  Also, we use lowercase $\gamma$ and $\delta$ to denote elements of
$|\Gamma|\,a$ and $|\Delta|\,a$ respectively.
\end{notation}

\end{definition}
\begin{definition} $\bs{\hSub\,\Gamma\,\Delta : \Set_\omega}$ is the set of natural transformations
from $\Gamma$ to $\Delta$. $\sigma : \hSub\,\Gamma\,\Delta$ has action
$|\sigma| : \{a : \Cono\} \to |\Gamma|\,a \to |\Delta|\,a$ such that
$|\sigma|\,(\gamma[f]) = (|\sigma|\,\gamma)[f]$.
\end{definition}

\begin{definition}
$\bs{\hTy_{1,j}\,\Gamma : \Set_\omega}$ is the set of \emph{displayed
presheaves} over $\Gamma$; see e.g.\ \cite[Chapter~1.2]{huber-thesis}. This is
equivalent to the set of presheaves over the category of elements of $\Gamma$,
but it is usually more convenient in calculations. An $A : \hTy\,\Gamma$ has
an action on objects $|A| : \{a : \Cono\} \to |\Gamma|\,a \to \Set_j$ and an
action on morphisms $\blank[\blank] : |A|\,\gamma \to (f : \Subo\,a\,b) \to
|A|\,(\gamma[f])$, such that $\alpha[\id] = \alpha$ and $\alpha[f \circ g] =
\alpha[f][g]$.

\begin{notation}
  We write $\alpha$ and $\beta$ respectively for elements of $|A|\,\gamma$ and
  $|B|\,\gamma$.
\end{notation}
\end{definition}

\begin{definition}
  $\bs{\hTm_{1,j}\,\Gamma\,A : \Set_\omega}$ is the set of sections of the
  displayed presheaf $A$. This can be viewed as a dependently typed analogue of
  a natural transformation.  A $t : \hTm_{1,j}\,\Gamma\,A$ has action $|t| :
  \{a\} \to (\gamma : |\Gamma|\,a) \to |A|\,\gamma$ such that $|t|\,(\gamma[f])
  = (|t|\,\gamma)[f]$.
\end{definition}

We also look at he empty context and context extension with meta-level types,
as these will appear in subsequent definitions.

\begin{definition}\label{def:psh-emptycon}
$\bs{{\widehat{\emptycon}} : \hCon}$ is defined as the presheaf
which is constantly $\top$, i.e.\ $|\wh{\emptycon}|\,\_ = \top$.
\end{definition}

\begin{definition}\label{def:psh-ext} For $A : \hTy_{1,j}\,\Gamma$, we define $\bs{\Gamma\,\,\wh{\ext}\,A}$
pointwise by $|\Gamma\,\,\wh{\ext}\,A|\,a := (\gamma : |\Gamma|\,a) \times |A|\,\gamma$
and $(\gamma,\,\alpha)[f] := (\gamma[f],\,\alpha[f])$.
\end{definition}

Using the above definitions, we can model the syntactic category of 2LTT, and
also the meta-level family structure and all meta-level type formers. For
expositions in previous literature, see \cite[Chapter~1.2]{huber-thesis} and
\cite[Section~4.1]{Hofmann97syntaxand}.

\subsubsection{The Object-Level Fragment}

We move on to modeling the object-level syntactic fragment of 2LTT. We make some
preliminary definitions. First, note that types in the object theory yield a
presheaf, and terms yield a displayed presheaf over them; this immediately
follows from the specification of a family structure in a cwf. Hence, we do a
bit of a name overloading, and have $\Ty_{\mbbo j} : \hCon$ and $\Tm_{\mbbo j} : \hTy\,\Ty_{\mbbo j}$.

\begin{definition}
$\bs{\hTy_{0,j}\,\Gamma : \Set_\omega}$ is defined as $\hSub\,\Gamma\,\Ty_{\mbbo j}$,
and $\bs{\hTm_{0,j}\,\Gamma\,A : \Set_\omega}$ is defined as $\hTm\,\Gamma\,(\Tm_{\mbbo j}[A])$.
\end{definition}

For illustration, if $A : \hTy_{0,j}\,\Gamma$, then $A :
\hSub\,\Gamma\,\Ty_{\mbbo j}$, so $|A| : \{a : \Cono\} \to |\Gamma|\,a \to
\Ty_{\mbbo j}\,a$. In other words, the action of $A$ on objects maps a semantic
context to a syntactic object-level type. Likewise, for $t :
\hTm_{0,j}\,\Gamma\,A$, we have $|t| : (\gamma : |\Gamma|\,a) \to \Tm_{\mbbo
  j}\,a\,(|A|\,\gamma)$, so we get a syntactic object-level term as output.

\begin{definition} For $A : \Ty_{0,j}\,\Gamma$, we define $\bs{\Gamma\,\,\wh{\ext}\,A}$
as $|\Gamma\,\,\wh{\ext}\,A|\,a := (\gamma : |\Gamma|\,a) \times \Tm_{\mbbo
  j}\,a\,(|A|\,\gamma)$ and $(\gamma,\,t)[f] := (\gamma[f],\,t[f])$. Thus,
extending a semantic context with an object-level type appends an object-level
term.
\end{definition}

Using the above definitions and following \cite{twolevel}, we can model all type
formers in $\hTy_{0,j}$. Intuitively, this is because $\hTy_{0,j}$ and
$\hTm_{0,j}$ return types and terms, so we can reuse the type and term formers
in the object theory.

\subsubsection{Lifting}

\begin{definition}\label{def:psh-lifting}
$\bs{\wh{\Lift} A : \hTy_{1,j}\,\Gamma}$ is defined as $\Tm_{\mbbo j}[A]$. With
  this, we get that $\hTm_{0,j}\,\Gamma\,A$ is equal to
  $\hTm_{1,j}\,\Gamma\,(\wh{\Lift}\,A)$. Hence, we can define both quoting and
  splicing as identity functions in the model.
\end{definition}

\subsection{Closed Staging}

\begin{definition}
The \textbf{evaluation morphism, denoted $\bs{\ev}$} is a morphism of 2LTT
models, between the syntax of 2LTT and $\hato$. It is defined by recursion on
the syntax (that is, appealing to the initiality of the syntax), and it strictly
preserves all structure.
\end{definition}

\begin{definition} \textbf{Closed staging} is defined as follows.
\begin{alignat*}{4}
  & \Stage : \Ty_{0,j}\,\emptycon \to \Ty_{\mbbo j}\,\emptycon \hspace{2em} && \Stage : \Tm_{0,j}\,\emptycon\,A \to \Tm_{\mbbo j}\,\emptycon\,(\Stage\,A) \\
  & \Stage\,A := |\ev\,A|\,\{\emptycon\}\,\tt && \Stage\,t := |\ev\,t|\,\{\emptycon\}\,\tt
\end{alignat*}
Note that $\wh{\emptycon}$ is defined as the terminal presheaf, which is
constantly $\top$. Also, $|\ev\,A|\{\emptycon\} : |\ev\,\emptycon|\,\emptycon \to \Ty_{\mbbo
  j}\,\emptycon$, therefore $|\ev\,A|\{\emptycon\} : \top \to \Ty_{\mbbo
  j}\,\emptycon$ and $|\ev\,t|\,\{\emptycon\} : \top \to \Tm_{\mbbo
  j}\,\emptycon\,(|\ev\,A|\,\tt)$.
\end{definition}

What about general (open) staging though? Given $A : \Ty_{0,j}\,\emb{\Gamma}$,
we get $|\ev\,A|\,\{\Gamma\} : |\emb{\Gamma}|\,\Gamma \to \Ty_{\mbbo j}$. We
need an element of $|\emb{\Gamma}|\,\Gamma$, in order to obtain object-level
type. Such ``generic'' semantic environments should be possible to construct, because
elements of $|\emb{\Gamma}|\,\Gamma$ are essentially lists of object-level terms
in $\Gamma$, by Definitions \ref{def:psh-ext} and \ref{def:psh-emptycon}, so $|\emb{\Gamma}|\,\Gamma$ should be
isomorphic to $\Subo\,\Gamma\,\Gamma$.  It turns out that this falls out from
the stability proof of $\ev$.

\subsection{Open Staging, Stability and Strictness}

We define a family of functions $\blank^P$ by induction on object syntax, such
that the interpretation of a context yields a generic semantic environment. The
induction motives are as follows.
\begin{alignat*}{4}
  & (\Gamma : \Cono)^P                    &&: |\ev\,\emb{\Gamma}|\,\Gamma && (\sigma : \Subo\,\Gamma\,\Delta)^P    &&: \Delta^P[\sigma] = |\ev\,\emb{\sigma}|\,\Gamma^P \\
  & (A      : \Ty_{\mbbo j}\,\Gamma)^P      &&: A = |\ev\,\emb{A}|\,\Gamma^P\hspace{2em} && (t      : \Tm_{\mbbo j}\,\Gamma\,A)^P   &&: t = |\ev\,\emb{t}|\,\Gamma^P
\end{alignat*}
We look at the interpretation of contexts.
\begin{itemize}
\item  For $\emptycon^P$, we need an element of $|\ev\,\emb{\emptycon}|\,\emptycon$, hence
  an element of $\top$, so we define $\emptycon^P$ as $\tt$.
\item
  For $(\Gamma \ext A)^P$, we need an element of
    \[(\gamma : |\ev{\emb{\Gamma}}|\,(\Gamma\ext A)) \times \Tm_{\mbbo j}\,(\Gamma \ext A)\,(|\ev\emb{A}|\,\gamma).\]
  We have $\Gamma^P : |\ev{\emb{\Gamma}}|\,\Gamma$, which we can weaken as
  $\Gamma^P[\p] : |\ev{\emb{\Gamma}}|\,(\Gamma \ext A)$, so we set the first
  projection of the result as $\Gamma^P[\p]$. For the second projection, the
  goal type can be simplified as follows:
  \begin{alignat*}{4}
    &       &&\Tm_{\mbbo j}\,(\Gamma \ext A)\,(|\ev\emb{A}|\,(\Gamma^P[\p])) && \\
    & =\,\, &&\Tm_{\mbbo j}\,(\Gamma \ext A)\,((|\ev\emb{A}|\,\Gamma^P)[\p]) &&\hspace{2em} \text{by naturality of $\ev\emb{A}$}\\
    & =\,\, &&\Tm_{\mbbo j}\,(\Gamma \ext A)\,(A[\p])                        &&\hspace{2em} \text{by $A^P$}
  \end{alignat*}
  We have the zero de Bruijn variable $\q : \Tm_{\mbbo j}\,(\Gamma \ext
  A)\,(A[\p])$.  Hence, we define $(\Gamma \ext A)^P$ as $(\Gamma^P[\p],\,\q)$.
\end{itemize}
Thus, a generic semantic context $\Gamma^P : |\ev\emb{\Gamma}|\,\Gamma$ is just
a list of variables, corresponding to the identity substitution $\id :
\Subo\,\Gamma\,\Gamma$ which maps each variable to itself.

The rest of the $\blank^P$ interpretation is straightforward and we omit it
here. In particular, preservation of definitional equalities is automatic, since
types, terms and substitutions are all interpreted as proof-irrelevant
equations.

\begin{definition}\label{def:open-staging} We define \textbf{open staging} as follows.
\begin{alignat*}{4}
  & \Stage : \Ty_{0,j}\,\emb{\Gamma} \to \Ty_{\mbbo j}\,\Gamma \hspace{2em} && \Stage : \Tm_{0,j}\,\emb{\Gamma}\,A \to \Tm_{\mbbo j}\,\Gamma\,(\Stage\,A) \\
  & \Stage\,A := |\ev\,A|\,\Gamma^P && \Stage\,t := |\ev\,t|\,\Gamma^P
\end{alignat*}
\end{definition}

\begin{theorem} Open staging is stable.
\end{theorem}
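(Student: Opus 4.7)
The plan is to observe that stability is essentially \emph{built into} the $\blank^P$ construction: its very induction motives
\[ (A : \Tyo\,\Gamma)^P : A = |\ev\,\emb{A}|\,\Gamma^P \qquad (t : \Tmo\,\Gamma\,A)^P : t = |\ev\,\emb{t}|\,\Gamma^P \]
are exactly the stability equations for $\Stage$ composed with $\emb{\blank}$. So once the $\blank^P$ interpretation is finished, the theorem reduces to unfolding definitions.

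Concretely, for a type $A : \Tyo\,\Gamma$, Definition \ref{def:open-staging} gives $\Stage\,\emb{A} = |\ev\,\emb{A}|\,\Gamma^P$, and then $A^P$ directly yields $|\ev\,\emb{A}|\,\Gamma^P = A$, hence $\Stage\,\emb{A} = A$. The term case is symmetric, using $t^P$ in place of $A^P$. So the real content of the proof lies in the prior construction of $\blank^P$: showing that the four components can be defined simultaneously by induction on the QIIT syntax of the object theory. The context case was already treated in the excerpt; for substitutions we build a list of terms by recursion on target-context extensions, reusing the $(\Gamma \ext A)^P$ pattern; for types and terms we dispatch on the constructor, unfolding $\ev$ on $\emb{\blank}$ applied to each type/term former, and using the fact that both $\emb{\blank}$ and $\ev$ strictly preserve all structure to reduce the goal to a congruence of the inductive hypotheses. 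Each step also requires naturality of $\ev\,\emb{A}$ under object-level substitution, exactly as in the $(\Gamma \ext A)^P$ case written out above.

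The main obstacle is ensuring that $\blank^P$ respects the equational constructors of the object syntax QIIT: each definitional equality in $\mbbo$ (substitution laws, the $\beta$-rules for $\Sigma$, $\Pi$ and $\Nat$, the $\app$/$\lam$ isomorphism, the $\Code$/$\El$ isomorphism, and so on) becomes a proof obligation that two equality proofs coincide. Two features of the setting discharge these automatically. First, since $\emb{\blank}$ and $\ev$ both preserve all structure strictly, the induced equalities between interpreted right-hand sides hold definitionally in $\hato$. Second, the motives of $\blank^P$ on types, terms and substitutions are valued in propositional equalities (on $\Ty_{\mbbo j}$, $\Tm_{\mbbo j}$ and $\Sub_\mbbo$ respectively), so proof irrelevance removes any coherence burden. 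This is the reason for the remark at the end of the $\blank^P$ construction that ``preservation of definitional equalities is automatic''; it is exactly this automation that lets the stability theorem land as a one-line corollary of $\blank^P$.
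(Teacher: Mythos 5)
Your proposal is correct and takes essentially the same route as the paper: the paper's proof is exactly the one-line unfolding of $\Stage\,\emb{A} = |\ev\,\emb{A}|\,\Gamma^P$ followed by an appeal to $A^P$ (resp.\ $t^P$), with the remaining cases of the $\blank^P$ construction left implicit on the grounds that the motives are proof-irrelevant equations. Your additional discussion of why the equational constructors of the object-syntax QIIT are respected matches the paper's own remark that preservation of definitional equalities is automatic.
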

\begin{proof} For $A : \Ty_{\mbbo j}$, $\Stage\,\emb{A}$ is by definition $|\ev\,\emb{A}|\,\Gamma^P$,
  hence by $A^P$ it is equal to $A$. Likewise, $\Stage\,\emb{t}$ is equal to $t$ by $t^P$.
\end{proof}

\begin{theorem} Open staging is strict.
\end{theorem}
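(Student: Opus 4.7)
The plan is to reduce strictness to two facts already in place: first, the evaluation morphism $\ev$ is defined via the recursion principle of the 2LTT syntax as a morphism of 2LTT models, which in the setoid interpretation means its extraction is defined by structural recursion on the underlying syntax tree and sends each 2LTT former to the corresponding presheaf-level former on the nose; second, in the model $\hato$, each object-level type and term former is defined pointwise in terms of the corresponding former of $\mbbo$ applied to the action at each object.

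Given any object-level former, I would unfold both layers in turn. Taking $\Pi$ as the guiding case, let $A : \Ty_{0,j}\,\emb{\Gamma}$ and $B : \Ty_{0,j}\,(\emb{\Gamma}\,\ext\,A)$. By the first fact, the extraction of $\ev$ satisfies $\ev(\Pi\,A\,B) = \wh{\Pi}\,(\ev\,A)\,(\ev\,B)$ strictly. Applying $|\blank|\,\Gamma^P$ and unfolding the definition of $\wh{\Pi}$ in the presheaf model produces, in $\mbbo$, the expression $\Pi\,(|\ev\,A|\,\Gamma^P)\,(|\ev\,B|\,(\Gamma^P[\p],\,\q))$. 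The first argument is $\Stage\,A$ by Definition~\ref{def:open-staging}, and since $(\Gamma\,\ext\,\Stage\,A)^P$ unfolds to $(\Gamma^P[\p],\,\q)$, the second argument is exactly $\Stage\,B$. Hence $\Stage\,(\Pi\,A\,B) = \Pi\,(\Stage\,A)\,(\Stage\,B)$ on the nose. The remaining object-level formers---$\Sigma$, $\Nat_{0,j}$, $\U_{0,j}$, $\lam$, $\app$, $\zero$, $\suc$, $\NatElim$, together with pairing and projections---are handled by the same two-step template: invoke strict preservation of $\ev$, then unfold the presheaf definition at $\Gamma^P$, appropriately extended under binders.

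The main obstacle is the binder bookkeeping: one must verify that the generic environment $(\Gamma\,\ext\,\Stage\,A)^P$ appearing in the induced $\Stage$ equation matches precisely the argument that $|\ev\,B|$ receives once the presheaf former is unfolded. This relies on the naturality equation for $\ev\,\emb{A}$ under $\p$ already invoked in the construction of $\blank^P$, together with the explicit definition of $\wh{\Pi}$, $\wh{\Sigma}$ and $\wh{\NatElim}$, which each project their binder argument at a context of the shape $(\gamma[\p],\,\q)$. Once this coherence is checked for a single binder-introducing former, every remaining former reduces to an essentially bureaucratic match of arguments on the two sides.
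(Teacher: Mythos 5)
Your proposal is correct and follows the same route as the paper, whose entire proof is the one-line observation that every object-level former in $\wh{\Ty}_{0,j}$ and $\wh{\Tm}_{0,j}$ is defined in the presheaf model using the corresponding former of the object syntax; your two-step unfolding (strict preservation by $\ev$, then pointwise evaluation at $\Gamma^P$, with the binder environment $(\Gamma^P[\p],\,\q)$ matching $(\Gamma \ext \Stage\,A)^P$) is just a careful elaboration of that same observation.
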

\begin{proof}
  This is evident from the definition of the presheaf model, where the action of
  each type and term former in $\wh{\Ty}_{0,j}$ and $\wh{\Tm}_{0,j}\,\Gamma\,A$ is
  defined using the corresponding type or term former in the object syntax.
\end{proof}

\subsection{Implementation and Efficiency}

We discuss now the operational behavior of the extracted staging algorithm
and look at potential adjustments and optimizations that make it more efficient
or more convenient to implement. Recall the types of the component functions in $\ev$:
\begin{alignat*}{3}
  & |\blank| \circ \ev : \Ty_{1 j}\,\Gamma \to \{\Delta : \Cono\} \to |\ev\,\Gamma|\,\Delta \to \Set_j\\
  & |\blank| \circ \ev : \Ty_{0 j}\,\Gamma \to \{\Delta : \Cono\} \to |\ev\,\Gamma|\,\Delta \to \Ty_{\mbbo j}\,\Delta\\
  & |\blank| \circ \ev : \Tm_{1 j}\,\Gamma\,A \to \{\Delta : \Cono\} \to (\gamma : |\ev\,\Gamma|\,\Delta) \to
    |\ev\,A|\,\gamma \\
  & |\blank| \circ \ev : \Tm_{0 j}\,\Gamma\,A \to \{\Delta : \Cono\} \to (\gamma : |\ev\,\Gamma|\,\Delta) \to
    \Tm_{\mbbo\,j}\,\Delta\,(|\ev\,A|\,\gamma)
\end{alignat*}
We interpret syntactic types or terms in a semantic environment
$\gamma : |\ev\,\Gamma|\,\Delta$. These environments are lists containing
a mix of  object-level terms and semantic values. The semantic values are represented
using metatheoretic inductive types and function types.
\begin{itemize}
  \item $|\Nat_1|\,\_$ is simply the set of natural numbers $\mbb{N}$.
  \item $|\Sigma_1\,A\,B|\,\gamma$ is simply a set of pairs of values.
  \item A non-dependent function type $A\,\wh{\to}\,B$ is defined as the presheaf
        exponential. The computational part of $|A\,\wh{\to}\,B|\,\{\Delta\}\,\gamma$
        is given by a function with type
        \[ (\Theta : \Cono) \to (\sigma : \Subo\,\Theta\,\Delta) \to |A|\,(\gamma[\sigma]) \to |B|\,(\gamma[\sigma])        \]
        This may be also familiar as the semantic implication from the Kripke
        semantics of intuitionistic logics. Whenever we evaluate a function
        application, we supply an extra $\id : \Subo\,\Delta\,\Delta$.
        This may incur cost via the $\gamma[\id]$ restrictions in a naive
        implementation, but this is easy to optimize, by introducing a formal
        representation of $\id$, such that $\gamma[\id]$ immediately computes to
        $\gamma$. The case of dependent functions is analogous operationally.
\end{itemize}
In summary, in the meta-level fragment of 2LTT, $\ev$ yields a reasonably efficient
computation of closed values, which reuses functions from the ambient
metatheory. Alternatively, instead of using ambient functions, we could use our
own implementation of function closures during staging.

In contrast, we have a bit of an efficiency problem in the handling of
object-level binders: whenever we go under such binder we have to weaken the
current semantic environment. Concretely, when moving from $\Delta$ to
$\Delta \ext A$, we have to shift $\gamma : |\ev\,\Gamma|\,\Delta$ to
$\gamma[\p] : |\ev\,\Gamma|\,(\Delta \ext A)$. This cannot be avoided by an easy
optimization trick. For object-level entries in the environment, this is cheap,
because we just add an extra explicit weakening, but for semantic values
weakening may need to perform deep traversals.

The same efficiency issue arises in formal presheaf-based
normalization-by-evaluation, such as in \cite{kaposinbe} and
\cite{coquand2018canonicity}. However, in practical implementations this issue
can be fully solved by using De Bruijn \emph{levels} in the semantic domain,
thus arranging that weakening has no operational cost on semantic values; see
\cite{coquand1996algorithm}, \cite{modulartc} and \cite{untypedtc} for
implementations in this style. We can use the same solution in staging. It is a
somewhat unfortunate mismatch that indices are far more convenient in
formalization, but levels are more efficient in practice.

In our prototype implementation, we use the above optimization with De Bruijn
levels, and we drop explicit substitutions from the 2LTT core syntax. In the
object-level syntax, we use De Bruijn levels for variables and closures in
binders. This lets us completely skip the weakening of environments, which is
slightly more efficient than the more faithfully extracted algorithm.

Additionally, we use an untyped tagged representation of semantic
values, since in Haskell (the implementation language) we do not have
strong enough types to represent typed semantic values. For example, there
are distinct data constructors storing semantic function values, natural number
literals and quoted expressions.

\paragraph{Caching} In a production-strength implementation we would need some
caching mechanism, to avoid excessive duplication of code. For example, if we
use the staged $\mit{map}$ function multiple times with the same type and
function arguments, we do not want to generate separate code for each
usage. De-duplicating object-level code with function types is usually safe,
since function bodies become closed top-level code after closure conversion. We
leave this to future work.

\section{Soundness of Staging}\label{sec:soundness}

In this section we prove soundness of staging. We build a proof-relevant logical
relation between the evaluation morphism $\ev$ and a \emph{restriction}
morphism, which restricts 2LTT syntax to object-level contexts. The relational
interpretations of $\Ty_{0,j}$ and $\Tm_{0,j}$ will yield the soundness
property.

\subsection{Working in $\hat{\mbbo}$}

We have seen that 2LTT can be modeled in the presheaf topos
$\hato$. Additionally, $\hato$ supports all type formers of extensional type
theory and certain other structures which are stable under object-level
substitution. As all constructions in this section must be stable, it makes
sense to work internally to $\hato$. This style has been previously used in
normalization proofs \cite{coquand2018canonicity} and also in the metatheory of
cubical type theories
\cite{licata2018internal,orton_et_al:LIPIcs:2016:6564,DBLP:conf/csl/CavalloMS20}.

When we work internally in a model of a type theory, we do not explicitly refer
to contexts, types, and substitutions. For example, when working in Agda, we do
not refer to Agda's typing contexts. Instead, we only work with terms, and use
functions and universes to abstract over types and semantic contexts. Hence, we
have to convert along certain isomorphisms when we switch between the internal
and external views. In the following, we summarize features in $\hato$ and also
the internal-external conversions.

We write $\whset_j$ for ordinal-indexed Russell-style universes. Formally, we
have Coquand-style universes, but for the sake of brevity we omit $\El$ and
$\Code$ from internal syntax. Universes are cumulative, and closed under $\Pi$,
$\Sigma$, extensional identity $\blank\!=\!\blank$ and inductive types. We use
the same conventions as in Notation \ref{basic-notation}.

\noindent The basic scheme for internalization is as follows:
\begin{alignat*}{3}
  & \Gamma : \hCon                  && \hspace{1em}\text{is internalized as}\hspace{1em} \Gamma : \whset_\omega\\
  & \sigma : \hSub\,\Gamma\,\Delta  && \hspace{1em}\text{is internalized as}\hspace{1em} \sigma : \Gamma \to \Delta\\
  & A : \hTy_{1,j}\,\Gamma           && \hspace{1em}\text{is internalized as}\hspace{1em} A      : \Gamma \to \whset_j \\
  & t : \hTm_{1,j}\,\Gamma\,A        && \hspace{1em}\text{is internalized as}\hspace{1em} t      : (\gamma : \Gamma) \to A\,\gamma
\end{alignat*}
\subsubsection{Object-Theoretic Syntax} The syntax of the object theory is clearly
fully stable under object-theoretic substitution, so we can internalize all of
its type and term formers. We internalize object-level types and terms as
$\Ty_{\mbbo j} : \whset_0$ and $\Tm_{\mbbo j} : \Ty_{\mbbo j} \to
\whset_0$. $\Ty_{\mbbo j}$ is closed under type formers. For instance, we have
\[ \Nat_j : \Ty_{\mbbo j}\hspace{1.5em}\zero_j : \Tm_{\mbbo j}\,\Nat_j\hspace{1.5em}
    \suc_j : \Tm_{\mbbo j}\,\Nat_j \to \Tm_{\mbbo j}\,\Nat_j \]
together with $\NatElim$, and likewise we have all other type formers.
\subsubsection{Internal $\ev$} We also use an internal view of
$\ev$ which maps 2LTT syntax to internal values; i.e.\ we compose $\ev$ with
internalization. Below, note that the input to $\ev$ is \emph{external}, so we
mark $\ev$ as being parameterized by external input.
\begin{alignat*}{3}
  &\ev\,(\Gamma &&: \Con)                      &&: \whset_\omega \\
  &\ev\,(\sigma &&: \Sub\,\Gamma\,\Delta)      &&: \ev\,\Gamma \to \ev\,\Delta \\
  &\ev\,(A      &&: \Ty_{1,j}\,\Gamma)          &&: \ev\,\Gamma \to \whset_j\\
  &\ev\,(t      &&: \Tm_{1,j}\,\Gamma)          &&: (\gamma : \ev\,\Gamma) \to \ev\,A\,\gamma\\
  &\ev\,(A      &&: \Ty_{0,j}\,\Gamma)          &&: \ev\,\Gamma \to \Ty_{\mbbo_j}\\
  &\ev\,(t      &&: \Tm_{0,j}\,\Gamma\,A)       &&: (\gamma : \ev\,\Gamma) \to \Tm_{\mbbo j}\,(\ev\,A\,\gamma)
\end{alignat*}
\subsubsection{Object-Level Fragment of 2LTT}
The purely object-level syntactic fragment of 2LTT can be internalized as
follows. We define externally the presheaf of object-level types as
$|\Ty_{0,j}|\,\Gamma := \Ty_{0,j}\,\emb{\Gamma}$, and the displayed presheaf of
object-level terms over $\Ty_{0,j}$ as $|\Tm_{0,j}|\,\{\Gamma\}\,A :=
\Tm_{0,j}\,\emb{\Gamma}\,A$. Hence, internally we have $\Ty_{0,j} : \whset_0$ and
$\Tm_{0,j} : \Ty_{0,j} \to \whset_0$. $\Ty_{0,j}$ is closed under all type formers,
analogously as we have seen for $\Ty_{\mbbo j}$.

%% \begin{alignat*}{6}
%%   &\ev\,(\Gamma &&: \Con)                      &&: \whset_\omega && \ev\,(t      &&: \Tm_{1,j}\,\Gamma)          &&: (\gamma : \ev\,\Gamma) \to \ev\,A\,\gamma\\
%%   &\ev\,(\sigma &&: \Sub\,\Gamma\,\Delta)      &&: \ev\,\Gamma \to \ev\,\Delta \hspace{3.5em}&& \ev\,(A      &&: \Ty_{0,j}\,\Gamma)          &&: \ev\,\Gamma \to \Ty_{\mbbo_j}\\
%%   &\ev\,(A      &&: \Ty_{1,j}\,\Gamma)          &&: \ev\,\Gamma \to \whset_j&& \ev\,(t      &&: \Tm_{0,j}\,\Gamma\,A)       &&: (\gamma : \ev\,\Gamma) \to \Tm_{\mbbo j}\,(\ev\,A\,\gamma)
%% \end{alignat*}

%% \begin{alignat*}{3}
%%   &\ev\,(\Gamma &&: \Con)                      &&: \whset_\omega \\
%%   &\ev\,(\sigma &&: \Sub\,\Gamma\,\Delta)      &&: \ev\,\Gamma \to \ev\,\Delta \\
%%   &\ev\,(A      &&: \Ty_{1,j}\,\Gamma)          &&: \ev\,\Gamma \to \whset_j\\
%%   &\ev\,(t      &&: \Tm_{1,j}\,\Gamma)          &&: (\gamma : \ev\,\Gamma) \to \ev\,A\,\gamma\\
%%   &\ev\,(A      &&: \Ty_{0,j}\,\Gamma)          &&: \ev\,\Gamma \to \Ty_{\mbbo_j}\\
%%   &\ev\,(t      &&: \Tm_{0,j}\,\Gamma\,A)       &&: (\gamma : \ev\,\Gamma) \to \Tm_{\mbbo j}\,(\ev\,A\,\gamma)
%% \end{alignat*}

\subsubsection{Embedding}
Now, the embedding operation $\emb{\blank}$ can be also internalized on types
and terms, as $\emb{\blank} : \Ty_{\mbbo j} \to \Ty_{0,j}$, and $\emb{\blank}
:\Tm_{\mbbo\,j}\,A \to \Tm_{0,j}\,\emb{A}$. Embedding strictly preserves all structure.

\subsection{The Restriction Morphism}

We define a family of functions $\re$ from the 2LTT syntax to objects in
$\hato$. We will relate this to the evaluation morphism $\ev$ in the relational
interpretation. In short, $\re$ restricts 2LTT syntax so that it can only depend
on object-level contexts, which are given as $\emb{\Gamma}$.

\begin{definition}\label{def:restriction} We specify the types of the \textbf{restriction operations} internally, and the $|\blank|$
components of the operations externally. The naturality of $|\blank|$ is
straightforward in each case.
\begingroup
%% \allowdisplaybreaks
\begin{alignat*}{5}
  & \re\,(\Gamma : \Con) : \whset_0 && \re\,(\sigma : \Sub\,\Gamma\,\Delta) : \re\,\Gamma \to \re\,\Delta \\
  & |\re\,\Gamma|\,\Delta := \Sub\,\emb{\Delta}\,\Gamma && |\re\,\sigma|\,\gamma := \sigma \circ \gamma \\
  & \\
  & \re\,(A : \Ty_{1,j}\,\Gamma) : \re\,\gamma \to \whset_0 && \re\,(t : \Tm_{1,j}\,\Gamma\,A) : (\gamma : \re\,\Gamma) \to \re\,A\,\gamma \\
  & |\re\,A|\,\{\Delta\}\,\gamma := \Tm_{1,j}\,\emb{\Delta}\,(A[\gamma])\hspace{2em} && |\re\,t|\,\gamma := t[\gamma]\\
  & \\
  & \re\,(A : \Ty_{0,j}\,\Gamma) : \re\,\Gamma \to \Ty_{0,j}\hspace{2em} && \re\,(t : \Tm_{0,j}\,\Gamma\,A) : (\gamma : \re\,\Gamma) \to \Tm_{0,j}\,(\re\,A\,\gamma)\\
  & |\re\,A|\,\gamma := A[\gamma] && |\re\,t|\,\gamma := t[\gamma]
\end{alignat*}
\endgroup
\end{definition}

\subsubsection{Preservation Properties of $\re$}
First, we note that $\re$ strictly preserves $\id$, $\blank\circ\blank$ and
type/term substitution, and it preserves $\emptycon$ and $\blank\ext_1\blank$ up
to isomorphism. We have the following isomorphisms internally to $\hato$:
\begin{alignat*}{3}
  & \re_{\emptycon} &&: \re\,\emptycon \simeq \top \\
  & \re_{\ext_1}   &&: \re\,(\Gamma \ext_1 A) \simeq ((\gamma : \re\,\Gamma) \times \re\,A\,\gamma)
\end{alignat*}

\begin{notation} When we have an isomorphism $f : A \simeq B$, we may write $f$ for the
  function in $A \to B$, and $f^{-1} : B \to A$ for its inverse.
\end{notation}

\begin{notation}
  We can use a pattern matching notation on isomorphisms. For example, if $f : A
  \simeq B$, then we may write $(\lambda\,(f\,a).\,t) : B \to C$, and likewise
  $(\lambda\,(f^{-1}\,b).\,t) : A \to C$, where the function bodies can refer to
  the bound $a : A$ and $b : B$ variables.
\end{notation}

The preservation properties $\re_{\emptycon}$ and $\re_{\ext_1}$ mean that $\re$
is a \emph{pseudomorphism} in the sense of \cite{gluing}, between the syntactic
cwf given by $(\Ty_{1,j},\,\Tm_{1,j})$ and the corresponding cwf structure in
$\hato$. In ibid.\ there is an analysis of such cwf morphisms, from which we
obtain the following additional preservation properties:
\begin{itemize}
\item Meta-level $\Sigma$-types are preserved up to isomorphism, so we have
  \[ \re_{\Sigma} : \re\,(\Sigma\,A\,B)\,\gamma
  \simeq ((\alpha : \re\,A) \times \re\,B\,(\rexti\,(\gamma,\,\alpha))). \]
  The semantic values of $\re\,(\Sigma\,A\,B)\,\gamma$
  are 2LTT terms with type $(\Sigma\,A\,B)[\gamma]$, restricted to object-level contexts.
  We can still perform pairing and projection with such restricted terms; hence
  the preservation property.
\item Meta-level $\Pi$-types and universes are preserved in a lax way. For $\Pi$, we have
  \[ \re_{\app} : \re\,(\Pi\,A\,B)\,\gamma \to (\alpha : \re\,A\,\gamma) \to \re\,B\,(\rexti\,(\gamma,\,\alpha))
  \]
  such that $\re_{\app}\,(\re\,t\,\gamma)\,\alpha =
  \re\,(\app\,t)\,(\rexti\,(\gamma,\,\alpha))$. In this case, we can apply a
  restricted term with a $\Pi$-type to a restricted term, but we cannot do
  lambda-abstraction, because that would require extending the context with a
  meta-level binder. For $\U_{1,j}$, we have
  \[
    \re_{\El} : \re\,\U_{1,j}\,\gamma \to \whset_j
  \]
  such that $\re_{\El}\,(\re\,t\,\gamma) = \re\,(\El\,t)\,\gamma$. Here, we only
  have lax preservation simply because $\whset_j$ is much larger than the the
  set of syntactic 2LTT types, so not every semantic $\whset_j$ has a syntactic
  representation.

\item Meta-level positive (inductive) types are preserved in an oplax way. In the case of
  natural numbers, we have
  \[ \re_{\mbb{N}} : \mbb{N} \to \re\,\Nat_{1,j}\,\gamma. \]
  This is a ``serialization'' map: from a metatheoretic natural number we
  compute a numeral as a closed 2LTT term. This works analogously for other
  inductive types, including parameterized ones. For an example, from a semantic
  list of restricted terms we would get a syntactic term with list type,
  containing the same restricted terms.
\end{itemize}
\subsubsection{Action on Lifting} We have an isomorphism $\re\,(\Lift\,A)\,\gamma \simeq
\Tm_{0,j}\,(\re\,A\,\gamma)$. This is given by quoting and splicing: we convert
between restricted meta-level terms with type $\Lift\,A$ and restricted
object-level terms with type $A$. Hence, we write components of this isomorphism
as $\spl\blank$ and $\qut{\blank}$, as internal analogues of the external
operations. With this, we also have $\re\,\qut{t}\,\gamma =
\qut{\re\,t\,\gamma}$ and $\re\,\spl\,t\,\gamma = \spl(\re\,t\,\gamma)$.

\subsubsection{Action on $\blank\ext_0\blank$} We have preservation up to isomorphism:
\[
  \re_{\ext_0} : \re\,(\Gamma \ext_0 A) \simeq ((\gamma : \re\,\Gamma)\times \Tm_{0,j}\,(\re\,A\,\gamma))
\]
This is because substitutions targeting $\Gamma \ext_0 A$ are the same as
pairs of substitutions and terms, by the specification of $\blank\!\ext_0\!\blank$.

\subsubsection{Action on Object-Level Types and Terms} $\re$ preserves all structure
in the object-level fragment of 2LTT. This follows from the $\re$ specification:
an external type $A : \Ty_{0,j}\,\Gamma$ is directly internalized as an element
of $\Ty_{0,j}$, and the same happens for $t : \Tm_{0,j}\,\Gamma\,A$.

\subsection{The Logical Relation}

Internally to $\hato$, we define by induction on the syntax of 2LTT a
proof-relevant logical relation interpretation, written as $\blank^{\approx}$. The induction
motives are specified as follows.
\begin{alignat*}{5}
  &(\Gamma &&: \Con)\rel && : \ev\,\Gamma \to \re\,\Gamma \to \whset_\omega\\
  &(\sigma &&: \Sub\,\Gamma\,\Delta)\rel &&: \Gamma\rel\,\gamma\,\gamma' \to \Delta\rel\,(\ev\,\sigma\,\gamma)\,(\re\,\sigma\,\gamma')\\
  &(A      &&: \Ty_{1,j}\,\Gamma)\rel &&: \Gamma\rel\,\gamma\,\gamma' \to \ev\,A\,\gamma \to \re\,A\,\gamma' \to \whset_j\\
  &(t      &&: \Tm_{1,j}\,\Gamma\,A)\rel &&: (\gamma\rel : \Gamma\rel\,\gamma\,\gamma') \to A\rel\,\gamma\rel\,(\ev\,t\,\gamma)\,(\re\,t\,\gamma')\\
  &(A      &&: \Ty_{0,j}\,\Gamma)\rel &&: \Gamma\rel\,\gamma\,\gamma' \to \emb{\ev\,A\,\gamma} = \re\,A\,\gamma' \\
  &(t      &&: \Tm_{0,j}\,\Gamma\,A)\rel &&: \Gamma\rel\,\gamma\,\gamma' \to \emb{\ev\,t\,\gamma} = \re\,t\,\gamma'
\end{alignat*}
For $\Con$, $\Sub$ and meta-level types and terms, this is a fairly standard
logical relation interpretation: contexts are mapped to relations, types to
dependent relations, and substitutions and terms respect relations. We will only have
modest complications in meta-level type formers because we will sometimes need to use
the lax/oplax preservations properties of $\re$. For object-level types and
terms, we get soundness statements: evaluation via $\ev$ followed by embedding
back to 2LTT is the same as restriction to object-level contexts. We describe
the $\blank\rel$ interpretation in the following.

\subsubsection{Syntactic Category and Terminal Object}
Here, we simply have $\id\rel\,\gamma\rel := \gamma\rel$ and $(\sigma \circ
\delta)\rel\,\gamma\rel := \sigma\rel\,(\delta\rel\,\gamma\rel)$. The terminal
object is interpreted as $\emptycon\rel\,\gamma\,\gamma' := \top$.

\subsubsection{Meta-Level Family Structure}
We interpret context extension and type/term substitution as follows. Note the
usage of the pattern matching notation on the $\rexti$ isomorphism.
\begin{alignat*}{4}
  & (\Gamma \ext_1 A)\rel\,(\gamma,\,\alpha)\,(\rexti(\gamma',\,\alpha')) && :=
    (\gamma\rel : \Gamma\rel\,\gamma\,\gamma') \times A\rel\,\gamma\rel\,\alpha\,\alpha'\\
  & (A[\sigma])\rel\,\gamma\rel\,\alpha\,\alpha' && := A\rel\,(\sigma\rel\,\gamma\rel)\,\alpha\,\alpha'\\
  & (t[\sigma])\rel\,\gamma\rel\                 && := t\rel\,(\sigma\rel\,\gamma\rel)
\end{alignat*}
It is enough to specify the $\blank\rel$ action on extended substitutions
$(\sigma,\,t) : \Sub\,\Gamma\,(\Delta\ext A)$, $\p : \Sub\,(\Gamma\ext
A)\,\Gamma$, $\p : \Sub\,(\Gamma\ext A)\,\Gamma$ and $\q : \Tm_{1,j}\,(\Gamma
\ext A)\,(A[\p])$. The category-with-families equations hold evidently.
\begin{alignat*}{4}
  & (\sigma,\,t)\rel\,\gamma\rel := (\sigma\rel\,\gamma\rel,\,t\rel\,\gamma\rel) \hspace{2em} \p\rel\,(\gamma\rel,\,t\rel) := \gamma\rel \hspace{2em} \q\rel\,(\gamma\rel,\,t\rel) := t\rel
\end{alignat*}

\subsubsection{Meta-Level Natural Numbers} First, we have to define a relation:
\begin{alignat*}{4}
  & (\Nat_{1,j})\rel : \Gamma\rel\,\gamma\,\gamma' \to \mbb{N} \to \re\,\Nat_{1,j}\,\gamma' \to \whset_j\\
  & (\Nat_{1,j})\rel\,\gamma\rel\,n\,n' := (\re_{\mbb{N}}\,n = n')
\end{alignat*}
Note that evaluation sends $\Nat_{1,j}$ to the semantic type of natural numbers,
i.e.\ $\ev\,\Nat_{1,j} = \mbb{N}$. We refer to the serialization map
$\re_{\mbb{N}} : \mbb{N} \to \re\,\Nat_{1,j}\,\gamma'$ above. In short,
$(\Nat_{1,j})\rel$ expresses \emph{canonicity}: $n'$ is canonical precisely if
it is of the form $\re_{\mbb{N}}\,n$ for some $n$.

For $\zero\rel$ and $\suc\rel$, we need to show that serialization preserves
$\zero$ and $\suc$ respectively, which is evident.

Let us look at elimination. We need to define the following:
\begin{alignat*}{4}
  &(\msf{NatElim}\,P\,z\,s\,n)\rel\,\gamma\rel :
  P\rel\,(\gamma\rel,\,n\rel\,\gamma\rel)\,(\ev\,(\msf{NatElim}\,P\,z\,s\,n)\,\gamma)\,
                                           (\re\,(\msf{NatElim}\,P\,z\,s\,n)\,\gamma')
\end{alignat*}
Unfolding $\ev$, we can further compute this to the following:
\begin{alignat*}{4}
  &(\msf{NatElim}\,P\,z\,s\,n)\rel\,\gamma\rel : \\
  &\hspace{2em}P\rel\,(\gamma\rel,\,n\rel\,\gamma\rel)\,
  (\msf{NatElim}\,(\lambda\,n.\,\ev\,P\,(\gamma,\,n))\,
                  (\ev\,z\,\gamma)\,
                  (\lambda\,n\,\mit{pn}.\,\ev\,s\,((\gamma,\,n),\,\mit{pn}))
                  (\ev\,n\,\gamma))\\
  &\hspace{8em}(\re\,(\msf{NatElim}\,P\,z\,s\,n)\,\gamma')
\end{alignat*}
In short, we need to show that $\msf{NatElim}$ preserves relations. Here we
switch to the external view temporarily. By Definition \ref{def:restriction}, we know that
\[
   |\re\,(\msf{NatElim}\,P\,z\,s\,n)|\,\gamma' = (\msf{NatElim}\,P\,z\,s\,n))[\gamma'].
\]
At the same time, we have $n\rel\,\gamma\rel : \re_{\mbb{N}}\,(\ev\,n\,\gamma) =
\re\,n\,\gamma'$, hence we know externally that $|\ev\,n|\,\gamma = n[\gamma']$.
In other words, $n[\gamma']$ is canonical and is obtained as the serialization
of $\ev\,n\,\gamma$. Therefore, $(\msf{NatElim}\,P\,z\,s\,n)[\gamma']$ is
definitionally equal to $|\ev\,n|\,\gamma$-many applications of $s$ to
$z$, and we can use $|\ev\,n|\,\gamma$-many applications of $s\rel$ to $z\rel$
to witness the goal type. The $\beta$-rules for $\msf{NatElim}$ are also respected
by this definition.

\subsubsection{Meta-Level $\Sigma$-Types} We define relatedness pointwise. Pairing and projection
are interpreted as meta-level pairing and projection.
\begin{alignat*}{3}
  &(\Sigma\,A\,B)\rel\,\gamma\rel : ((\alpha : \ev\,A\,\gamma) \times \ev\,B\,(\gamma,\,\alpha))
             \to \re\,(\Sigma\,A\,B)\,\gamma' \to \whset_j\\
  &(\Sigma\,A\,B)\rel\,\gamma\rel\,(\alpha,\,\beta)\,(\re_{\Sigma}^{-1}(\alpha',\,\beta')) :=
             (\alpha\rel : A\rel\,\gamma\rel\,\alpha\,\alpha') \times B\rel\,(\gamma\rel,\,\alpha\rel)\,\beta\,\beta'
\end{alignat*}

\subsubsection{Meta-Level $\Pi$-Types} We again use a pointwise definition. Note that we need to use
$\re_{\app}$ to apply $t' : \re\,(\Pi\,A\,B)\,\gamma'$ to $\alpha'$.
\begin{alignat*}{3}
  &(\Pi\,A\,B)\rel\,\gamma\rel : ((\alpha : \ev\,A\,\gamma) \to \ev\,B\,(\gamma,\,\alpha))
             \to \re\,(\Pi\,A\,B)\,\gamma' \to \whset_j\\
  &(\Pi\,A\,B)\rel\,\gamma\rel\,t\,t' :=
    (\alpha : \ev\,A\,\gamma)(\alpha' : \re\,A\,\gamma')(\alpha\rel : A\rel\,\gamma\rel\,\alpha\,\alpha') \to
    B\rel\,(\gamma\rel,\,\alpha\rel)\,(t\,\alpha)\,(\re_{\app}\,t'\,\alpha')
\end{alignat*}
For abstraction and application, we use a curry-uncurry definition:
\begin{alignat*}{4}
  &(\lam\,t)\rel\,\gamma\rel                &&:= \lambda\,\alpha\,\alpha'\,\alpha\rel.\,t\rel\,(\gamma\rel,\,\alpha\rel)\\
  &(\app\,t)\rel\,(\gamma\rel,\,\alpha\rel) &&:= t\rel\,\gamma\rel\,\alpha\,\alpha'\,\alpha\rel
\end{alignat*}

\subsubsection{Meta-Level Universes} We interpret $\U_{1,j}$ as a semantic relation space:
\begin{alignat*}{3}
  &(\U_{1,j})\rel\,\gamma\rel : \whset_j \to \re\,\U_{1,j}\,\gamma' \to \whset_{j+1} \\
  &(\U_{1,j})\rel\,\gamma\rel\,t\,t' := t \to \re_{\El}\,t' \to \whset_{j}
\end{alignat*}
Note that we have
\begin{alignat*}{3}
  & (\El\,t)\rel   &&: (\gamma\rel : \Gamma\rel\,\gamma\,\gamma') \to \ev\,t\,\gamma \to \re\,(\El\,t)\,\gamma' \to \whset_j\\
  & (\Code\,t)\rel &&: (\gamma\rel : \Gamma\rel\,\gamma\,\gamma') \to \ev\,t\,\gamma \to \re\,(\El\,t)\,\gamma' \to \whset_j.
\end{alignat*}
The types coincide because of the equation $\re\,(\El\,t)\,\gamma' =
\re_{\El}\,(\re\,t\,\gamma')$. Therefore we can interpret $\El$ and $\Code$ as
identity maps, as $(\El\,t)\rel := t\rel$ and $(\Code\,t)\rel := t\rel$.

\subsubsection{Object-Level Family Structure} We interpret extended contexts as follows.
\begin{alignat*}{3}
  & (\Gamma \ext_0 A)\rel\,(\gamma,\,\alpha)\,(\rextizero\,(\gamma',\,\alpha')) := (\gamma\rel : \Gamma\rel\,\gamma\,\gamma') \times (\emb{\alpha} = \alpha')
\end{alignat*}
Note that $\alpha : \Tm_{\mbbo j}\,(\ev\,A\,\gamma)$, so $\emb{\alpha} :
\Tm_{0,j}\,\emb{\ev\,A\,\gamma}$, but since $A\rel\,\gamma\rel :
\emb{\ev\,A\,\gamma} = \re\,A\,\gamma'$, we also have $\emb{\alpha} :
\Tm_{0,j}\,(\re\,A\,\gamma')$. Thus, the equation $\emb{\alpha} = \alpha'$ is
well-typed. For type substitution, we need
\[
  (A[\sigma])\rel\,\gamma\rel : \emb{\ev\,(A[\sigma])\,\gamma} = \re\,(A[\sigma])\,\gamma'.
\]
The goal type computes to $\emb{\ev\,A\,(\ev\,\sigma\,\gamma)} =
\re\,A\,(\re\,\sigma\,\gamma')$. This is obtained directly from
$A\rel\,(\sigma\rel\,\gamma\rel)$.  Similarly,
\begin{alignat*}{6}
  & (t[\sigma])\rel\,\gamma\rel &&:= t\rel\,(\sigma\rel\,\gamma\rel) && \p\rel\,(\gamma\rel,\,\alpha\rel) &&:= \gamma\rel \\
  & (\sigma,\,t)\rel\,\gamma\rel &&:= (\sigma\rel\,\gamma\rel,\,t\rel\,\gamma\rel)\hspace{2em}&& \q\rel\,(\gamma\rel,\,\alpha\rel) &&:= \alpha\rel
\end{alignat*}

\subsubsection{Lifting Structure}
\begin{alignat*}{3}
  &(\Lift\,A)\rel : \Gamma\rel\,\gamma\,\gamma' \to \Tm_{\mbbo j}\,(\ev\,A\,\gamma) \to \re\,(\Lift\,A)\,\gamma' \to \whset_j\\
  &(\Lift\,A)\rel\,\gamma\rel\,t\,t' := (\emb{t} = \spl t')
\end{alignat*}
This is well-typed by $A\rel\,\gamma\rel : \emb{\ev\,A\,\gamma} = \re\,A\,\gamma'$, which implies
that $\emb{t} : \Tm_{\mbbo j}\,(\re\,A\,\gamma')$. For $\qut{t}$, we need
\[  \qut{t}\rel\,\gamma\rel : \emb{\ev\,\qut{t}\,\gamma} = \spl(\re\,\qut{t}\,\gamma'). \]
The goal type can be further computed to $\emb{\ev\,t\,\gamma} = \re\,t\,\gamma'$, which we prove
by $t\rel\,\gamma\rel$. For splicing, we need
\[
  (\spl t)\rel\,\gamma\rel : \emb{\ev\,\spl t\,\gamma} = \re\,\spl t\,\gamma'
\]
where the goal type computes to $\emb{F\,t\,\gamma} = \spl(\re\,t\,\gamma')$,
but this again follows directly from $t\rel\,\gamma\rel$.

\subsubsection{Object-Level Type Formers}
Lastly, object-level type formers are straightforward. For types, we need
$\emb{\ev\,A\,\gamma} = \re\,A\,\gamma'$, and likewise for terms. Note that
$\ev$ and $\emb{\blank}$ preserve all structure, and $\re$ preserves all
structure on object-level types and terms. Hence, each object-level $A$ and $t$
case in $\blank\rel$ trivially follows from induction hypotheses.

This concludes the definition of the $\blank\rel$ interpretation.

\subsection{Soundness}

\begin{definition} First, we introduce shorthands for
external operations that can be obtained from $\blank\rel$.
\begin{alignat*}{5}
  & \text{For}\hspace{1em} \Gamma &&: \Con \hspace{1em}&&\text{we get}\hspace{1em}
         |\Gamma\rel| &&: \{\Delta : \Cono\} \to |\ev\,\Gamma|\,\Delta \to \Sub\,\emb{\Delta}\,\Gamma \to \Set_j\\
  & \text{For}\hspace{1em} A &&: \Ty_{0,j}\,\Gamma \hspace{1em}&&\text{we get}\hspace{1em} |A\rel| &&: |\Gamma\rel|\,\gamma\,\gamma' \to \emb{\ev\,A\,\gamma} = A[\gamma']\\
  & \text{For}\hspace{1em} t &&: \Tm_{0,j}\,\Gamma\,A \hspace{1em}&&\text{we get}\hspace{1em}|t\rel| &&: |\Gamma\rel|\,\gamma\,\gamma' \to \emb{\ev\,t\,\gamma} = t[\gamma']
\end{alignat*}
Since $\blank\rel$ was defined in $\hato$, we also know that the above are all
stable under object-theoretic substitution.
\end{definition}

\begin{theorem}[Soundness for generic contexts]
For each $\Gamma : \Cono$, we have ${\Gamma^P}\rel : |\Gamma\rel|\,\Gamma^P\,\id$.
\end{theorem}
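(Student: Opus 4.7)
The plan is to induct on $\Gamma : \Cono$, matching the structure of the recursive definition of $\Gamma^P$ and invoking the previously established interpretations for $\emptycon\rel$ and $(\Gamma \ext_0 A)\rel$.

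For the base case $\Gamma = \emptycon$, we have $\emptycon^P = \tt$ and $\id : \Sub\ob\,\emptycon\,\emptycon$, so externally we need an element of $|\emptycon\rel|\,\tt\,\id$. Since $\emptycon\rel$ was defined as the constantly $\top$ relation, any value will do, and we take $\tt$.

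For the inductive step $\Gamma \ext_0 A$, recall that $(\Gamma \ext_0 A)^P = (\Gamma^P[\p],\,\q)$ and that the identity substitution on $\emb{\Gamma \ext_0 A}$ decomposes under $\re_{\ext_0}$ as $(\p,\,\q)$. Plugging into the $(\Gamma \ext_0 A)\rel$ clause, the goal becomes a pair consisting of $\Gamma\rel\,(\Gamma^P[\p])\,\p$ together with the equation $\emb{\q} = \q$. The first component is obtained by transporting the inductive hypothesis ${\Gamma^P}\rel : |\Gamma\rel|\,\Gamma^P\,\id$ along the object-level substitution $\p$: because $\blank\rel$ was set up internally in $\hato$, it is automatically natural with respect to object-theoretic substitutions, and substituting by $\p$ turns $\id$ into $\p$ and $\Gamma^P$ into $\Gamma^P[\p]$. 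The second component $\emb{\q} = \q$ holds definitionally, because the embedding $\emb{\blank}$ strictly preserves the zero variable $\q$ (and analogously for $\p$, which is what makes the decomposition of $\id$ along $\re_{\ext_0}$ work out as $(\p,\,\q)$).

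The only place where one has to be careful is the bookkeeping around the isomorphism $\re_{\ext_0}$ and the implicit type transports: the element $\q$ lives at type $A[\p]$, and one must check that the equation $\emb{\q} = \q$ really does type-check against the equality $\emb{\ev\,A\,(\Gamma^P[\p])} = A[\p]$ that is supplied by $A\rel$ applied to the inductively obtained context relatedness. This is routine once one unfolds the substitution equations, $A^P$, and the naturality of $\ev\,\emb{A}$ which were already used in the construction of $(\Gamma \ext_0 A)^P$, and I expect it to be the only nontrivial bit. Nothing more elaborate is required: the theorem is essentially the observation that $\Gamma^P$ and $\id$ are definitionally ``the same list of variables'' on the two sides of the relation.
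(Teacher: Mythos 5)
Your proposal is correct and follows the paper's proof essentially verbatim: induction on object-theoretic contexts, the trivial $\top$ case for $\emptycon$, and for $\Gamma \ext_0 A$ obtaining the first component by weakening the inductive hypothesis along $\p$ via naturality of $|\Gamma\rel|$ and the second component from strict preservation of $\q$ by $\emb{\blank}$. The additional remarks about the $\re_{\ext_0}$ bookkeeping are accurate but do not change the argument.
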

\begin{proof}
We define ${\blank^P}\rel$ by induction on object-theoretic contexts.
${\emptycon^P}\rel : \top$ is defined trivially as $\tt$. For ${(\Gamma \ext
  A)^P}\rel$, we need $(\gamma\rel : |\Gamma|\rel\,(\Gamma^P[\p])\,\p) \times
(\emb{\q} = \q)$. We get ${\Gamma^P}\rel : |\Gamma\rel|\,\Gamma^P\,\id$. Because
of the naturality of $|\Gamma\rel|$, this can be weakened to ${\Gamma^P}\rel[\p]
: |\Gamma\rel|\,(\Gamma^P[\p])\,\p$. Also, $\emb{\q} = \q$ holds immediately.
\end{proof}

\begin{theorem}[Soundness of staging]
The open staging algorithm from Definition \ref{def:open-staging} is sound.
\end{theorem}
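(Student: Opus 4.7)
The plan is to combine the previous theorem (soundness for generic contexts) with the external operations $|A\rel|$ and $|t\rel|$ already extracted from the object-level cases of the logical relation. Since those two pieces have absorbed all the substantive content, the proof amounts to instantiating the relation at a single carefully chosen pair.

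First I would unfold the definition of open staging: $\Stage\,A = |\ev\,A|\,\Gamma^P$ and $\Stage\,t = |\ev\,t|\,\Gamma^P$, for $A : \Ty_{0,j}\,\emb{\Gamma}$ and $t : \Tm_{0,j}\,\emb{\Gamma}\,A$. The previous theorem supplies the relational witness ${\Gamma^P}\rel : |\emb{\Gamma}\rel|\,\Gamma^P\,\id$, where $\id : \Sub\,\emb{\Gamma}\,\emb{\Gamma}$ is the identity substitution on the embedded context. I would then apply the external operation $|A\rel| : |\emb{\Gamma}\rel|\,\gamma\,\gamma' \to \emb{\ev\,A\,\gamma} = A[\gamma']$ to the triple $(\Gamma^P,\,\id,\,{\Gamma^P}\rel)$, obtaining $\emb{|\ev\,A|\,\Gamma^P} = A[\id] = A$ after using the identity law of substitution; this is precisely $\emb{\Stage\,A} = A$. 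The term case is strictly analogous: $|t\rel|$ applied to the same triple yields $\emb{\Stage\,t} = t[\id] = t$, which is the required soundness equation for terms.

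I do not expect any genuine obstacle here. The nontrivial content — canonicity for $\Nat_{1,j}$, the lax and oplax preservation properties of $\re$ for meta-level $\Pi$, universes and inductive types, and the construction of the generic semantic environment $\Gamma^P$ together with its relational witness — has already been discharged in earlier results. The only bookkeeping is to observe that instantiating the object-level cases of the logical relation at $\gamma' = \id$ produces right-hand sides of the form $A[\id]$ and $t[\id]$, which collapse to $A$ and $t$ by the identity laws of the explicit substitution calculus, so that the resulting equations match exactly the stated soundness conditions.
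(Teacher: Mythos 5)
Your proposal matches the paper's proof essentially verbatim: unfold $\Stage$, apply the external object-level relational operations $|A\rel|$ and $|t\rel|$ to the witness ${\Gamma^P}\rel : |\emb{\Gamma}\rel|\,\Gamma^P\,\id$ from the generic-context soundness theorem, and collapse $A[\id]$ and $t[\id]$ via the identity laws. No gaps; this is exactly the argument the paper gives.
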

\begin{proof}
\mbox{}
\begin{itemize}
\item For $A : \Ty_{0,j}\,\emb{\Gamma}$, we have that $\emb{\Stage\,A} =
\emb{|\ev\,A|\,\Gamma^P}$ by the definition of $\Stage$, and moreover we have
$|A\rel|\,{\Gamma^P}\rel : \emb{|\ev\,A|\,\Gamma^P} = A[\id]$, hence $\emb{\Stage\,A} = A$.

\item For $t : \Tm_{0,j}\,\emb{\Gamma}\,A$, using $|t\rel|\,{\Gamma^P}\rel : \emb{|\ev\,t|\,\Gamma^P} = t[\id]$, we likewise
  have $\emb{\Stage\,t} = \emb{|\ev\,t|\,\Gamma^P} = t[\id] = t$.
\end{itemize}
\end{proof}

\begin{corollary}[Conservativity of 2LTT]\label{conservativity}
From the soundness and stability of staging, we get that $\emb{\blank}$ is
bijective on types and terms, hence $\Ty_{\mbbo j}\,\Gamma \simeq \Ty_{0,j}\,\emb{\Gamma}$ and
$\Tm_{\mbbo j}\,\Gamma\,A \simeq \Tm_{0,j}\,\emb{\Gamma}\,\emb{A}$.
\end{corollary}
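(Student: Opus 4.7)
The plan is to observe that the statement is essentially an immediate packaging of the soundness and stability theorems we just proved, so the proof is a matter of organizing the two directions of the bijection. We have two maps between the relevant sets: embedding $\emb{\blank}$ goes from the object syntax into the object-level fragment of the 2LTT syntax, while $\Stage$ goes back. I would simply check that these are mutually inverse.

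For types, stability gives $\Stage\,\emb{A} = A$ (so $\Stage \circ \emb{\blank} = \id_{\Ty_{\mbbo j}\,\Gamma}$), and soundness gives $\emb{\Stage\,A} = A$ (so $\emb{\blank} \circ \Stage = \id_{\Ty_{0,j}\,\emb{\Gamma}}$). This establishes $\Ty_{\mbbo j}\,\Gamma \simeq \Ty_{0,j}\,\emb{\Gamma}$.

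The slightly subtle point, and the one worth writing out, is the bijection on terms, because the indexing type has to match. Given $t : \Tm_{\mbbo j}\,\Gamma\,A$, we have $\emb{t} : \Tm_{0,j}\,\emb{\Gamma}\,\emb{A}$, which lives in the target set. Going the other way, for $u : \Tm_{0,j}\,\emb{\Gamma}\,\emb{A}$ we have $\Stage\,u : \Tm_{\mbbo j}\,\Gamma\,(\Stage\,\emb{A})$, and by stability on types $\Stage\,\emb{A} = A$, so $\Stage\,u : \Tm_{\mbbo j}\,\Gamma\,A$ as required. Stability then gives $\Stage\,\emb{t} = t$, while soundness gives $\emb{\Stage\,u} = u$, establishing the second bijection.

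The only potential obstacle is bookkeeping around the indexing-type coincidence for terms, which in a fully formal treatment would require a transport along the stability equation $\Stage\,\emb{A} = A$; since we are working in the quotient-by-conversion setting and omit transports (as stated in the metatheory paragraph), this is not an issue in the paper's presentation.
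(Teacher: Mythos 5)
Your proposal is correct and matches the paper's (implicit) argument exactly: the corollary is stated as an immediate consequence of soundness and stability, which together say that $\Stage$ and $\emb{\blank}$ are mutually inverse. Your extra remark about the transport along $\Stage\,\emb{A} = A$ in the term case is a fair observation, and is indeed absorbed by the paper's convention of omitting transports in the quotiented setting.
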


\paragraph{Alternative presentations}
The above conservativity proof could be rephrased as an instance of more modern
techniques which let us implicitly handle stability under 2LTT substitutions as
well, not just $\mbbo$ substitutions. This amounts to working in a \emph{modal}
internal language, where modalities control the interaction of the different base
categories.
\begin{itemize}
\item Using \emph{synthetic Tait computability} \cite{sterlingthesis}, we work
      in the internal language of the glued category $\msf{Id}_{\hato}\downarrow
      \re$.
\item Using \emph{relative induction} \cite{bocquet2021relative}, we work in the modal type theory
      obtained from the dependent right adjoint functor $\re^* : \wh{\msf{2LTT}} \to \hato$,
      where $\wh{\msf{2LTT}}$ denotes presheaves over the syntactic category of 2LTT.
\end{itemize}
We do not use either of these in this paper, for the following reasons. First,
the author of the paper is not sufficiently familiar with the above
techniques. Second, the task at hand is not too technically difficult, so using
advanced techniques is not essential. Contrast e.g.\ normalization for cubical
type theories, which is not feasible to handle without the more synthetic
presentations \cite{cubicalnbe}.

\section{Intensional Analysis}\label{sec:intensional-analysis}

We briefly discuss intensional analysis in this section. This means analyzing
the internal structure of object-level terms, i.e.\ values of $\Lift\,A$. Disallowing
intensional analysis is a major simplification, which is sometimes enforced
in implementations, for example in MetaML \cite{metaml}.

If we have sufficiently expressive inductive types in the meta-level language,
it is possible to simply use inductive deeply embedded syntaxes, which can be
analyzed; our example in Section \ref{sec:staged-programming} for embedding a
simply typed lambda calculus is like this. However, native metaprogramming
features could be more concise and convenient than deep embeddings, similarly to
how staged programming is more convenient than code generation with deeply
embedded syntaxes.

In the following, we look at the semantics of intensional analysis in the
standard presheaf models. We only discuss basic semantics here, and leave
practical considerations to future work.

\begin{definition}
The \textbf{Yoneda embedding} is a functor from $\mbbo$ to $\hato$, sending
$\Gamma : \Cono$ to $\msf{y}\Gamma : \hCon$, such that $|\msf{y}\Gamma|\,\Delta
= \Subo\,\Delta\,\Gamma$. We say that $\Gamma : \hCon$ is \emph{representable}
if it is isomorphic to a Yoneda-embedded context.
\end{definition}

\begin{lemma}[Yoneda lemma]
We have $\hSub\,(\msf{y}\Gamma)\,\Delta \simeq |\Delta|\,\Gamma$
\cite[Section~III.2]{maclane98categories}. Also, we have
$\hTm_{1,j}\,(\msf{y}\Gamma)\,A \simeq |A|\,\id$, where $\id :
\Subo\,\Gamma\,\Gamma$.
\end{lemma}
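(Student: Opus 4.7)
The plan is to construct explicit mutually inverse maps in both directions and verify the roundtrips using only the presheaf unit/composition laws and naturality. I would treat the two isomorphisms uniformly, since the dependent version is structurally a mild refinement of the first.

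For the first isomorphism, define the forward map $\Phi : \hSub\,(\yon\Gamma)\,\Delta \to |\Delta|\,\Gamma$ by evaluation at identity, namely $\Phi(\sigma) := |\sigma|\{\Gamma\}\,\id$. Define the inverse $\Psi : |\Delta|\,\Gamma \to \hSub\,(\yon\Gamma)\,\Delta$ by sending $\delta$ to the natural transformation whose action at $a$ takes $f : \Subo\,a\,\Gamma$ (that is, an element of $|\yon\Gamma|\,a$) to $\delta[f] : |\Delta|\,a$. Naturality of $\Psi(\delta)$ is immediate: $(\delta[f])[g] = \delta[f\circ g] = \delta[f[g]]$ using the presheaf composition law on $\Delta$ and the definition of the action of $\yon\Gamma$ on morphisms. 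For $\Phi\circ\Psi = \id$, compute $\Phi(\Psi(\delta)) = \delta[\id] = \delta$ by the presheaf unit law. For $\Psi\circ\Phi = \id$, it suffices by function extensionality to check pointwise: $|\Psi(\Phi(\sigma))|\,f = (|\sigma|\,\id)[f]$, which by naturality of $\sigma$ equals $|\sigma|(\id\circ f) = |\sigma|\,f$.

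For the dependent version, $A : \hTy_{1,j}\,(\yon\Gamma)$ has action $|A| : \{a\} \to \Subo\,a\,\Gamma \to \Set_j$, and a section $t : \hTm_{1,j}\,(\yon\Gamma)\,A$ has action $|t| : \{a\} \to (f : \Subo\,a\,\Gamma) \to |A|\,f$ satisfying $|t|\,(f\circ g) = (|t|\,f)[g]$. The forward map sends $t$ to $|t|\{\Gamma\}\,\id : |A|\,\id$. The inverse sends $\alpha : |A|\,\id$ to the section with action $f \mapsto \alpha[f]$; the expected codomain $|A|\,f$ matches the given $|A|\,(\id\circ f) = |A|\,(\id[f])$ via the presheaf unit law on $\yon\Gamma$. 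Naturality and the two roundtrips proceed exactly as in the non-dependent case, each step consuming one application of the presheaf unit or composition law, or naturality of $t$.

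The main potential obstacle is bureaucratic: in the dependent case the codomains of the two sides of each roundtrip are $|A|$-values at propositionally (in fact definitionally) equal substitutions, and one must check that no explicit transport is needed. Because the metatheory assumes UIP and our presheaf laws $\gamma[\id]=\gamma$ and $\id\circ f = f$ are definitional equalities, these coincidences are judgmental and no coherence ornamentation is required. Everything else reduces to the classical one-line calculations, and the entire argument can be phrased uniformly as the unit/counit of the adjunction expressing representability of $\yon\Gamma$.
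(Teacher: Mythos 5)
Your proof is correct and is the standard Yoneda argument; the paper itself gives no proof of this lemma, deferring entirely to the citation of Mac Lane, and what you wrote is exactly the classical calculation that citation points to, extended routinely to the dependent case of sections of displayed presheaves. One small imprecision: the presheaf laws and $\id \circ f = f$ are propositional rather than definitional equalities in this setting (the base $\mbbo$ is a quotiented syntactic category and the functor laws are equations carried as data), but since the metatheory assumes UIP and the paper systematically omits transports, the coherence issues you dismiss are indeed harmless.
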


The Yoneda lemma restricts possible dependencies on representable contexts.  For
example, consider the staging behavior of $t : \Tm_{1}\,(\emptycon \ext (x :
\Bool_0))\,\Bool_1$. Staging $t$ yields essentially a natural transformation
$\hSub\,(\ev(\emptycon \ext \Bool_0))\,\mbb{B}$, where $\mbb{B}$ is the
metatheoretic type with two elements, considered as a constant presheaf.  Now,
if $\ev(\emptycon\,\ext\, \Bool_0)$ is representable, then
$\hSub\,(\ev(\emptycon \ext \Bool_0))\,\mbb{B} \simeq \mbb{B}$, so $t$ can be
staged in at most two different ways.

Which contexts are representable? In the specific 2LTT in this paper,
$\emb{\Gamma}$ is always representable in the presheaf
model. $|\ev{\emb{\Gamma}}|\,\Delta$ contains lists of object-theoretic terms,
hence $\ev{\emb{\Gamma}} \simeq \yon \Gamma$.  This implies that intensional
analysis is \emph{not compatible} with the standard presheaf model, for our
2LTT.

Consider a very basic feature of intensional analysis, \emph{decidability of
definitional equality} for object-level Boolean expressions.
\[
  \msf{decEq} : (x\,y : \Lift\,\Bool_0) \to (x =_1 y) +_1 (x \neq_1 y)
\]
The Yoneda lemma says that $\msf{decEq}$ cannot actually decide
definitional equality. We expect that $\msf{decEq}$ lets us define non-constant
maps from $\Lift\,\Bool_0$ to $\Bool_1$, but the Yoneda lemma implies that we
only have two terms with distinct semantics in $\Tm_1\,(\emptycon \ext_0
\Bool_0)\,\Bool_1$.

A more direct way to show infeasibility of $\msf{decEq}$ in $\hato$ is to note
that definitional \emph{inequality} for object-level terms is not stable under
substitution, since inequal variables can be mapped to equal terms.

\paragraph{Stability under weakening only}
We can move to different 2LTT setups, where stability under substitution is not
required. We may have \emph{weakenings} only as morphisms in $\mbbo$.  A
weakening from $\Gamma$ to $\Delta$ describes how $\Gamma$ can be obtained by
inserting zero or more entries to $\Delta$. A wide range of intensional analysis
features are stable under weakenings. For example, $\msf{decEq}$ now holds in
$\hato$ whenever object-theoretic definitional equality is in fact decidable.
We also dodge the Yoneda lemma, since $\ev\,\emb{\Gamma}$ is not necessarily
representable anymore: $|\ev\,\emb{\Gamma}|\,\Delta$ is still a set of lists of
terms, but weakenings from $\Delta$ to $\Gamma$ are not lists of terms.

As a trade-off, if there is no notion of substitution in the specification of the
object theory, it is not possible to specify dependent or polymorphic types
there. We need a substitution operation to describe dependent elimination or
polymorphic instantiation.

However, stability under weakening is still sufficient for many practical use
cases, for example the monomorphization setup in Section
\ref{sec:monomorphization}. Here, we only have weakening as definitional
equality in the object theory, and we do not have $\beta\eta$-rules.

\paragraph{Closed modalities}\label{sec:closed-modalities}
Another option is to add a \emph{closed} or \emph{crisp} modality for modeling
closed object-level terms \cite{licata2018internal}. Since closed terms are not
affected by substitution, we are able to analyze them, and we can also recover
open terms by explicitly abstracting over contexts
\cite{DBLP:conf/lics/Hofmann99,DBLP:journals/corr/abs-2206-02831}. \emph{Function
pointers} in the style of the C programming language could be also an
interesting use case, since these must become closed after staging (as they
cannot capture closure environments).

%% for tracking object-level values which will become
%% closed terms after staging. Since closed terms are not affected by substitution,
%% we are free to analyze them. Such analysis can be viewed as an internalization
%% of canonicity for closed object-level terms. For example, a closed $\Bool_0$
%% term will be either $\true_0$ or $\false_0$ after staging, so it makes sense to
%% be able to eliminate from it to $\Bool_1$. This can be also viewed as a safe
%% implementation of a \emph{run} function in the style of MetaML \cite{metaml}.
%% In our case \emph{scope extrusion} would be prevented by the closed modality.

\section{Related Work}\label{sec:related-work}

The work Annekov et al.\ on two-level type theory \cite{twolevel}, building on
\cite{capriotti2017models} and \cite{hts}, is the foremost inspiration in
the current work. Interestingly, the above works do mention metaprogramming as a
source of intuition for 2LTT, but they only briefly touch on this aspect, and in
\cite{twolevel} the main focus is on extensions of basic 2LTT which do not have
staging semantics anymore. Ibid.\ conjectures the strong conservativity of 2LTT
in Proposition 2.18, and suggests categorical gluing as proof strategy. In this
work, we do essentially that: our presheaf model and the logical relation model
could be merged into a single instance of gluing along the $\re$ morphism; this
merges staging and its soundness proof. We split this construction to two parts
in order to get a self-contained presentation of the staging algorithm.

\cite{DBLP:conf/lics/Hofmann99} models higher-order abstract syntax of simply
typed object languages in presheaves over various syntactic categories. The main
difference between HOAS and 2LTT is that the former represents object-level binders
using meta-level functions while the latter has a primitive notion of object-level
binding.

The multi-stage calculus in \cite{multi-stage-calculus} supports staging
operations and type dependencies. However, there are major differences to the
current work. First, ibid.\ does not have large elimination, hence it does not
support computing types by staging. Second, it does not support staged
compilation in the sense of Definition \ref{def:staging}; there is no judgment
which ensures that staging operations can be eliminated and a purely
``object-level'' program can be extracted from a multi-stage program. Rather, the system
supports \emph{runtime} code generation. There are several modal multi-stage
systems which support runtime code generation but not staged compilation, such
as those based on S4 modalities (e.g.\ \cite{DBLP:journals/jacm/DaviesP01}) or
contextual modalities (\cite{DBLP:journals/pacmpl/JangGMP22}).

%% We make a
%% disctinction between staged compilation and staged computation. For the former,
%% we expect staging in the sense of Definition \ref{def:staging}; this is
%% essential if we want to use staging as a lowering step in a compilation
%% pipeline. Staged computation on the other hand may support runtime code
%% construction and evaluation, but may not guarantee that these features can be
%% entirely computed away.

Idris 1 supports compile-time partial evaluation, through \texttt{static}
annotations on function arguments \cite{scrap-your-inefficient-engine}. This can
be used for dependently typed code generation, but Idris 1 does not guarantee
that partial evaluation adequately progresses. For example, we can pass a neutral
value as static argument in Idris 1, while in 2LTT static function arguments
are always canonical during staging.

Our notation for quoting and splicing is borrowed from MetaML \cite{metaml}.  In
the following, we compare 2LTT to MetaML, MetaOCaml \cite{kiselyov14metaocaml}
and typed Template Haskell \cite{typed-th}.

2LTT explicitly tracks stages of types, in contrast to the mentioned
systems. There, we have type lifting (usually called $\msf{Code}$), quoting and
splicing, but lifting does not change the universe of types. We can write a
function with type $\Bool \to \msf{Code}\,\Bool$, and the two $\Bool$
occurrences refer to the exact same type.

This causes some operational confusion down the line, and additional
disambiguation is needed for stages of binders. For example, in typed Template
Haskell, top-level binders behave as runtime binders when used under a quote,
but behave as static binders otherwise. Given a top-level definition $b =
\True$, if we write a top-level definition $\msf{expr} = [||\,b\,||]$, then $b$
is a \emph{variable} in the staging output, but if we write $\msf{expr} =
\msf{if}\,b\,\msf{then}\,[||\,\True\,||]\,\msf{else}\,[||\,\False\,||]$, then
$b$ is considered to be a static value. In contrast, 2LTT does not distinguish
top-level and local binders, and in fact it has no syntactic or scope-based
restrictions; everything is enforced by typing.

2LTT supports staging for types and dependent types, while to our knowledge no
previous system for staged compilation does the same. It appears to the author
that the main missing component in previous systems is \emph{dependent types at
the meta level}, rather than staging operations; staging operations in 2LTT and
MetaOCaml are already quite similar. For example, any interesting operation on
the static-length vectors in Section \ref{sec:staged-programming} requires
meta-level dependent types, even if the object language is simply typed.

In MetaOCaml, there is a \emph{run} function for evaluating closed object-level
terms, however, this can fail during staging since closedness is not statically
ensured. In our setting, this could be reproduced in a safe way using a closed
modality; this is part of future work.

%% as we mentioned in Section \ref{sec:closed-modalities}, we
%% believe that the proper way to implement it in 2LTT would be to use a closed
%% modality. This use case is closely related to \cite{licata2018internal}. In
%% ibid.\ a semantics is sketched for this modality in a presheaf topos, and we
%% could directly adapt it to our presheaf model of 2LTT. That said, working
%% out the practical details of this modality remains future work.

Additionally, 2LTT only supports two stages, while the other noted systems allow
countable stages. It remains future work to develop NLTT (N-level type theory),
but this extension does appear to be straightforward. In the specification of
NLTT, we would simply need to move from $i \in \{0,1\}$ to $i \in \mbb{N}$,
although in the semantics there could be more subtleties.

Many existing staged systems also support \emph{side effects} at compile time,
while our version of 2LTT does not. Here, general considerations about the
mixing of dependent types and effects should apply; see
e.g.\ \cite{fire-triangle}. However, it should be possible to use standard
effect systems such as monads at the meta level.

\section{Conclusions and Future Work}\label{sec:conclusions}

We developed the foundational metatheory of 2LTT as a system of
two-stage compilation. We view the current results as more like a starting point
to more thorough investigation into applications and extensions; in this paper
we only sketched these.

We emphasize that variants of 2LTT can serve as languages where staging
features can be practically implemented, and also as \emph{metatheories}, not
only for the formalization for staging techniques, but also more generally for
reasoning about constructions in object languages. This purpose is prominent in
the original 2LTT use case in synthetic homotopy theory. The meta-level language
in a 2LTT can be expressive enough to express general mathematics, and the
meta-level propositional equality $\blank\!=_1\!\blank$ can be used to prove
$\beta\eta$-equality of object-level types and terms.

For example, we could use object theories as \emph{shallowly embedded} target
languages of optimization techniques and program translations. In particular, we
may choose a low-level object theory without higher-order functions, and
formalize closure conversion as an interpretation of an embedded meta-level
syntax into the object theory.

There is much future work in adapting existing staging techniques to 2LTT
(which, as we mentioned, can double as formalization of said techniques), or
adding staging where it has not been available previously.
\begin{itemize}
\item \emph{Let-insertion techniques}. These allow more precise control over
  positions of insertion, or smarter selection of such positions. In 2LTT,
  meta-level continuation monads could be employed to adapt some of the features
  in \cite{DBLP:journals/jfp/KameyamaKS11}. The automatic let-floating feature
  of MetaOCaml \cite{DBLP:journals/corr/abs-2201-00495} requires strengthening
  of terms, which is not stable under substitution, but it is stable under
  weakening, so perhaps it could be implemented as an ``intensional analysis''
  feature.
\item \emph{Generic programming with dependent types}. There is substantial
  literature in this area
  (e.g.\ \cite{loh11generic,chapman2010gentle,diehl-thesis,ornaments}), but only
  in non-staged settings.  2LTT should immediately yield staging for these
  techniques, assuming that the meta level has sufficient type formers
  (e.g.\ induction-recursion). We could also try to adapt previous work on
  generic treatment of partially static data types
  \cite{DBLP:journals/pacmpl/YallopGK18}; fully internalizing this in 2LTT would
  also require dependent types.
\end{itemize}

There are also ways that 2LTT itself could be enhanced.
\begin{itemize}
\item \emph{More stages, stage polymorphism}. Currently, there is substantial
  code duplication if the object-level and meta-level languages are similar.
  Stage polymorphism could help reduce this. A simple setup could include three
  stages, the runtime one, the static one, and another one which supports
  polymorphism over the first two stages.
\item \emph{Modalities}. We mentioned the closed modality in Section
  \ref{sec:closed-modalities}.  More generally, many \emph{multimodal}
  \cite{gratzer20multimodal} type theories could plausibly support staging. 2LTT
  itself can be viewed as a very simple multimodal theory with $\Lift$ as the
  only modality, which is also degenerate (because it entails no structural
  restriction on contexts). We could support multiple object theories, with
  modalities for lifting them to the meta level or for representing translations
  between them. We could also have modalities for switching between stability
  under substitution and stability under weakening.
\item \emph{Intensional analysis}. We only touched on the most basic semantics
  of intensional analysis in Section \ref{sec:intensional-analysis}. It remains
  a substantial challenge to work out the practicalities. For good ergonomics,
  we would need something like a ``pattern matching'' operation on object-level
  terms, or some induction principle which is more flexible than the plain
  assumption of decidable object-level equality.
\end{itemize}

\section{Data Availability Statement}

A standalone prototype implementation is available as \cite{staged-demo}. It
implements elaboration and staging for a 2LTT. It includes a tutorial and code
examples that expand on the contents of Section \ref{sec:tour-of-2ltt}.

\interlinepenalty=10000
\bibliography{references}

%%% -*-BibTeX-*-
%%% Do NOT edit. File created by BibTeX with style
%%% ACM-Reference-Format-Journals [18-Jan-2012].

\begin{thebibliography}{52}

%%% ====================================================================
%%% NOTE TO THE USER: you can override these defaults by providing
%%% customized versions of any of these macros before the \bibliography
%%% command.  Each of them MUST provide its own final punctuation,
%%% except for \shownote{}, \showDOI{}, and \showURL{}.  The latter two
%%% do not use final punctuation, in order to avoid confusing it with
%%% the Web address.
%%%
%%% To suppress output of a particular field, define its macro to expand
%%% to an empty string, or better, \unskip, like this:
%%%
%%% \newcommand{\showDOI}[1]{\unskip}   % LaTeX syntax
%%%
%%% \def \showDOI #1{\unskip}           % plain TeX syntax
%%%
%%% ====================================================================

\ifx \showCODEN    \undefined \def \showCODEN     #1{\unskip}     \fi
\ifx \showDOI      \undefined \def \showDOI       #1{#1}\fi
\ifx \showISBNx    \undefined \def \showISBNx     #1{\unskip}     \fi
\ifx \showISBNxiii \undefined \def \showISBNxiii  #1{\unskip}     \fi
\ifx \showISSN     \undefined \def \showISSN      #1{\unskip}     \fi
\ifx \showLCCN     \undefined \def \showLCCN      #1{\unskip}     \fi
\ifx \shownote     \undefined \def \shownote      #1{#1}          \fi
\ifx \showarticletitle \undefined \def \showarticletitle #1{#1}   \fi
\ifx \showURL      \undefined \def \showURL       {\relax}        \fi
% The following commands are used for tagged output and should be
% invisible to TeX
\providecommand\bibfield[2]{#2}
\providecommand\bibinfo[2]{#2}
\providecommand\natexlab[1]{#1}
\providecommand\showeprint[2][]{arXiv:#2}

\bibitem[Abel and Coquand(2007)]%
        {untypedtc}
\bibfield{author}{\bibinfo{person}{Andreas Abel} {and} \bibinfo{person}{Thierry
  Coquand}.} \bibinfo{year}{2007}\natexlab{}.
\newblock \showarticletitle{Untyped Algorithmic Equality for Martin-L{\"{o}}f's
  Logical Framework with Surjective Pairs}.
\newblock \bibinfo{journal}{\emph{Fundam. Informaticae}} \bibinfo{volume}{77},
  \bibinfo{number}{4} (\bibinfo{year}{2007}), \bibinfo{pages}{345--395}.
\newblock
\urldef\tempurl%
\url{http://content.iospress.com/articles/fundamenta-informaticae/fi77-4-05}
\showURL{%
\tempurl}


\bibitem[Abel et~al\mbox{.}(2011)]%
        {modulartc}
\bibfield{author}{\bibinfo{person}{Andreas Abel}, \bibinfo{person}{Thierry
  Coquand}, {and} \bibinfo{person}{Miguel Pagano}.}
  \bibinfo{year}{2011}\natexlab{}.
\newblock \showarticletitle{A Modular Type-checking algorithm for Type Theory
  with Singleton Types and Proof Irrelevance}.
\newblock \bibinfo{journal}{\emph{Log. Methods Comput. Sci.}}
  \bibinfo{volume}{7}, \bibinfo{number}{2} (\bibinfo{year}{2011}).
\newblock
\urldef\tempurl%
\url{https://doi.org/10.2168/LMCS-7(2:4)2011}
\showDOI{\tempurl}


\bibitem[Abel et~al\mbox{.}(2018)]%
        {decidableconv}
\bibfield{author}{\bibinfo{person}{Andreas Abel}, \bibinfo{person}{Joakim
  {\"{O}}hman}, {and} \bibinfo{person}{Andrea Vezzosi}.}
  \bibinfo{year}{2018}\natexlab{}.
\newblock \showarticletitle{Decidability of conversion for type theory in type
  theory}.
\newblock \bibinfo{journal}{\emph{Proc. {ACM} Program. Lang.}}
  \bibinfo{volume}{2}, \bibinfo{number}{{POPL}} (\bibinfo{year}{2018}),
  \bibinfo{pages}{23:1--23:29}.
\newblock
\urldef\tempurl%
\url{https://doi.org/10.1145/3158111}
\showDOI{\tempurl}


\bibitem[{Agda developers}(2022)]%
        {agdadocs}
\bibfield{author}{\bibinfo{person}{{Agda developers}}.}
  \bibinfo{year}{2022}\natexlab{}.
\newblock \bibinfo{title}{Agda documentation}.
\newblock
\newblock
\urldef\tempurl%
\url{https://agda.readthedocs.io/en/v2.6.2.1/}
\showURL{%
\tempurl}


\bibitem[Altenkirch and Kaposi(2016)]%
        {ttintt}
\bibfield{author}{\bibinfo{person}{Thorsten Altenkirch} {and}
  \bibinfo{person}{Ambrus Kaposi}.} \bibinfo{year}{2016}\natexlab{}.
\newblock \showarticletitle{Type theory in type theory using quotient inductive
  types}. In \bibinfo{booktitle}{\emph{Proceedings of the 43rd Annual {ACM}
  {SIGPLAN-SIGACT} Symposium on Principles of Programming Languages, {POPL}
  2016, St. Petersburg, FL, USA, January 20 - 22, 2016}},
  \bibfield{editor}{\bibinfo{person}{Rastislav Bodik} {and}
  \bibinfo{person}{Rupak Majumdar}} (Eds.). \bibinfo{publisher}{{ACM}},
  \bibinfo{pages}{18--29}.
\newblock
\showISBNx{978-1-4503-3549-2}
\urldef\tempurl%
\url{https://doi.org/10.1145/2837614.2837638}
\showDOI{\tempurl}


\bibitem[Altenkirch and Kaposi(2017)]%
        {kaposinbe}
\bibfield{author}{\bibinfo{person}{Thorsten Altenkirch} {and}
  \bibinfo{person}{Ambrus Kaposi}.} \bibinfo{year}{2017}\natexlab{}.
\newblock \showarticletitle{{Normalisation by Evaluation for Type Theory, in
  Type Theory}}.
\newblock \bibinfo{journal}{\emph{{Logical Methods in Computer Science}}}
  \bibinfo{volume}{{Volume 13, Issue 4}} (\bibinfo{date}{Oct.}
  \bibinfo{year}{2017}).
\newblock
\urldef\tempurl%
\url{https://doi.org/10.23638/LMCS-13(4:1)2017}
\showDOI{\tempurl}


\bibitem[Annenkov et~al\mbox{.}(2019)]%
        {twolevel}
\bibfield{author}{\bibinfo{person}{Danil Annenkov}, \bibinfo{person}{Paolo
  Capriotti}, \bibinfo{person}{Nicolai Kraus}, {and} \bibinfo{person}{Christian
  Sattler}.} \bibinfo{year}{2019}\natexlab{}.
\newblock \showarticletitle{Two-Level Type Theory and Applications}.
\newblock \bibinfo{journal}{\emph{ArXiv e-prints}} (\bibinfo{date}{may}
  \bibinfo{year}{2019}).
\newblock
\urldef\tempurl%
\url{http://arxiv.org/abs/1705.03307}
\showURL{%
\tempurl}


\bibitem[Bocquet et~al\mbox{.}(2021)]%
        {bocquet2021relative}
\bibfield{author}{\bibinfo{person}{Rafa{\"e}l Bocquet}, \bibinfo{person}{Ambrus
  Kaposi}, {and} \bibinfo{person}{Christian Sattler}.}
  \bibinfo{year}{2021}\natexlab{}.
\newblock \showarticletitle{Relative induction principles for type theories}.
\newblock \bibinfo{journal}{\emph{arXiv preprint arXiv:2102.11649}}
  (\bibinfo{year}{2021}).
\newblock


\bibitem[B{\"{o}}hm and Berarducci(1985)]%
        {boehm-berarducci}
\bibfield{author}{\bibinfo{person}{Corrado B{\"{o}}hm} {and}
  \bibinfo{person}{Alessandro Berarducci}.} \bibinfo{year}{1985}\natexlab{}.
\newblock \showarticletitle{Automatic Synthesis of Typed Lambda-Programs on
  Term Algebras}.
\newblock \bibinfo{journal}{\emph{Theor. Comput. Sci.}}  \bibinfo{volume}{39}
  (\bibinfo{year}{1985}), \bibinfo{pages}{135--154}.
\newblock
\urldef\tempurl%
\url{https://doi.org/10.1016/0304-3975(85)90135-5}
\showDOI{\tempurl}


\bibitem[Brady and Hammond(2010)]%
        {scrap-your-inefficient-engine}
\bibfield{author}{\bibinfo{person}{Edwin~C. Brady} {and} \bibinfo{person}{Kevin
  Hammond}.} \bibinfo{year}{2010}\natexlab{}.
\newblock \showarticletitle{Scrapping your inefficient engine: using partial
  evaluation to improve domain-specific language implementation}. In
  \bibinfo{booktitle}{\emph{Proceeding of the 15th {ACM} {SIGPLAN}
  international conference on Functional programming, {ICFP} 2010, Baltimore,
  Maryland, USA, September 27-29, 2010}},
  \bibfield{editor}{\bibinfo{person}{Paul Hudak} {and}
  \bibinfo{person}{Stephanie Weirich}} (Eds.). \bibinfo{publisher}{{ACM}},
  \bibinfo{pages}{297--308}.
\newblock
\urldef\tempurl%
\url{https://doi.org/10.1145/1863543.1863587}
\showDOI{\tempurl}


\bibitem[Capriotti(2017)]%
        {capriotti2017models}
\bibfield{author}{\bibinfo{person}{Paolo Capriotti}.}
  \bibinfo{year}{2017}\natexlab{}.
\newblock \showarticletitle{Models of type theory with strict equality}.
\newblock \bibinfo{journal}{\emph{arXiv preprint arXiv:1702.04912}}
  (\bibinfo{year}{2017}).
\newblock


\bibitem[Castellan et~al\mbox{.}(2019)]%
        {cwfs}
\bibfield{author}{\bibinfo{person}{Simon Castellan}, \bibinfo{person}{Pierre
  Clairambault}, {and} \bibinfo{person}{Peter Dybjer}.}
  \bibinfo{year}{2019}\natexlab{}.
\newblock \showarticletitle{Categories with Families: Unityped, Simply Typed,
  and Dependently Typed}.
\newblock \bibinfo{journal}{\emph{CoRR}}  \bibinfo{volume}{abs/1904.00827}
  (\bibinfo{year}{2019}).
\newblock
\showeprint[arXiv]{1904.00827}
\urldef\tempurl%
\url{http://arxiv.org/abs/1904.00827}
\showURL{%
\tempurl}


\bibitem[Cavallo et~al\mbox{.}(2020)]%
        {DBLP:conf/csl/CavalloMS20}
\bibfield{author}{\bibinfo{person}{Evan Cavallo}, \bibinfo{person}{Anders
  M{\"{o}}rtberg}, {and} \bibinfo{person}{Andrew~W. Swan}.}
  \bibinfo{year}{2020}\natexlab{}.
\newblock \showarticletitle{Unifying Cubical Models of Univalent Type Theory}.
  In \bibinfo{booktitle}{\emph{28th {EACSL} Annual Conference on Computer
  Science Logic, {CSL} 2020, January 13-16, 2020, Barcelona, Spain}}
  \emph{(\bibinfo{series}{LIPIcs}, Vol.~\bibinfo{volume}{152})},
  \bibfield{editor}{\bibinfo{person}{Maribel Fern{\'{a}}ndez} {and}
  \bibinfo{person}{Anca Muscholl}} (Eds.). \bibinfo{publisher}{Schloss Dagstuhl
  - Leibniz-Zentrum f{\"{u}}r Informatik}, \bibinfo{pages}{14:1--14:17}.
\newblock
\urldef\tempurl%
\url{https://doi.org/10.4230/LIPIcs.CSL.2020.14}
\showDOI{\tempurl}


\bibitem[Chapman et~al\mbox{.}(2010)]%
        {chapman2010gentle}
\bibfield{author}{\bibinfo{person}{James Chapman},
  \bibinfo{person}{Pierre-{\'E}variste Dagand}, \bibinfo{person}{Conor
  McBride}, {and} \bibinfo{person}{Peter Morris}.}
  \bibinfo{year}{2010}\natexlab{}.
\newblock \showarticletitle{The gentle art of levitation}.
\newblock \bibinfo{journal}{\emph{ACM Sigplan Notices}} \bibinfo{volume}{45},
  \bibinfo{number}{9} (\bibinfo{year}{2010}), \bibinfo{pages}{3--14}.
\newblock


\bibitem[Coquand(1996)]%
        {coquand1996algorithm}
\bibfield{author}{\bibinfo{person}{Thierry Coquand}.}
  \bibinfo{year}{1996}\natexlab{}.
\newblock \showarticletitle{An Algorithm for Type-Checking Dependent Types}.
\newblock \bibinfo{journal}{\emph{Sci. Comput. Program.}} \bibinfo{volume}{26},
  \bibinfo{number}{1-3} (\bibinfo{year}{1996}), \bibinfo{pages}{167--177}.
\newblock
\urldef\tempurl%
\url{https://doi.org/10.1016/0167-6423(95)00021-6}
\showDOI{\tempurl}


\bibitem[Coquand(2019)]%
        {coquand2018canonicity}
\bibfield{author}{\bibinfo{person}{Thierry Coquand}.}
  \bibinfo{year}{2019}\natexlab{}.
\newblock \showarticletitle{Canonicity and normalization for dependent type
  theory}.
\newblock \bibinfo{journal}{\emph{Theor. Comput. Sci.}}  \bibinfo{volume}{777}
  (\bibinfo{year}{2019}), \bibinfo{pages}{184--191}.
\newblock
\urldef\tempurl%
\url{https://doi.org/10.1016/j.tcs.2019.01.015}
\showDOI{\tempurl}


\bibitem[Coutts et~al\mbox{.}(2007)]%
        {stream-fusion}
\bibfield{author}{\bibinfo{person}{Duncan Coutts}, \bibinfo{person}{Roman
  Leshchinskiy}, {and} \bibinfo{person}{Don Stewart}.}
  \bibinfo{year}{2007}\natexlab{}.
\newblock \showarticletitle{Stream fusion: from lists to streams to nothing at
  all}. In \bibinfo{booktitle}{\emph{Proceedings of the 12th {ACM} {SIGPLAN}
  International Conference on Functional Programming, {ICFP} 2007, Freiburg,
  Germany, October 1-3, 2007}}, \bibfield{editor}{\bibinfo{person}{Ralf Hinze}
  {and} \bibinfo{person}{Norman Ramsey}} (Eds.). \bibinfo{publisher}{{ACM}},
  \bibinfo{pages}{315--326}.
\newblock
\urldef\tempurl%
\url{https://doi.org/10.1145/1291151.1291199}
\showDOI{\tempurl}


\bibitem[Dagand(2017)]%
        {ornaments}
\bibfield{author}{\bibinfo{person}{Pierre{-}{\'{E}}variste Dagand}.}
  \bibinfo{year}{2017}\natexlab{}.
\newblock \showarticletitle{The essence of ornaments}.
\newblock \bibinfo{journal}{\emph{J. Funct. Program.}}  \bibinfo{volume}{27}
  (\bibinfo{year}{2017}), \bibinfo{pages}{e9}.
\newblock
\urldef\tempurl%
\url{https://doi.org/10.1017/S0956796816000356}
\showDOI{\tempurl}


\bibitem[Danvy et~al\mbox{.}(1996)]%
        {eta-expansion-trick}
\bibfield{author}{\bibinfo{person}{Olivier Danvy}, \bibinfo{person}{Karoline
  Malmkj{\ae}r}, {and} \bibinfo{person}{Jens Palsberg}.}
  \bibinfo{year}{1996}\natexlab{}.
\newblock \showarticletitle{Eta-Expansion Does The Trick}.
\newblock \bibinfo{journal}{\emph{{ACM} Trans. Program. Lang. Syst.}}
  \bibinfo{volume}{18}, \bibinfo{number}{6} (\bibinfo{year}{1996}),
  \bibinfo{pages}{730--751}.
\newblock
\urldef\tempurl%
\url{https://doi.org/10.1145/236114.236119}
\showDOI{\tempurl}


\bibitem[Davies and Pfenning(2001)]%
        {DBLP:journals/jacm/DaviesP01}
\bibfield{author}{\bibinfo{person}{Rowan Davies} {and} \bibinfo{person}{Frank
  Pfenning}.} \bibinfo{year}{2001}\natexlab{}.
\newblock \showarticletitle{A modal analysis of staged computation}.
\newblock \bibinfo{journal}{\emph{J. {ACM}}} \bibinfo{volume}{48},
  \bibinfo{number}{3} (\bibinfo{year}{2001}), \bibinfo{pages}{555--604}.
\newblock
\urldef\tempurl%
\url{https://doi.org/10.1145/382780.382785}
\showDOI{\tempurl}


\bibitem[Diehl(2017)]%
        {diehl-thesis}
\bibfield{author}{\bibinfo{person}{Larry Diehl}.}
  \bibinfo{year}{2017}\natexlab{}.
\newblock \emph{\bibinfo{title}{Fully Generic Programming over Closed Universes
  of Inductive-Recursive Types}}.
\newblock \bibinfo{thesistype}{Ph.\,D. Dissertation}. \bibinfo{school}{Portland
  State University}.
\newblock


\bibitem[Dunfield and Krishnaswami(2021)]%
        {DBLP:journals/csur/DunfieldK21}
\bibfield{author}{\bibinfo{person}{Jana Dunfield} {and} \bibinfo{person}{Neel
  Krishnaswami}.} \bibinfo{year}{2021}\natexlab{}.
\newblock \showarticletitle{Bidirectional Typing}.
\newblock \bibinfo{journal}{\emph{{ACM} Comput. Surv.}} \bibinfo{volume}{54},
  \bibinfo{number}{5} (\bibinfo{year}{2021}), \bibinfo{pages}{98:1--98:38}.
\newblock
\urldef\tempurl%
\url{https://doi.org/10.1145/3450952}
\showDOI{\tempurl}


\bibitem[Gill et~al\mbox{.}(1993)]%
        {short-cut}
\bibfield{author}{\bibinfo{person}{Andrew~John Gill}, \bibinfo{person}{John
  Launchbury}, {and} \bibinfo{person}{Simon L.~Peyton Jones}.}
  \bibinfo{year}{1993}\natexlab{}.
\newblock \showarticletitle{A Short Cut to Deforestation}. In
  \bibinfo{booktitle}{\emph{Proceedings of the conference on Functional
  programming languages and computer architecture, {FPCA} 1993, Copenhagen,
  Denmark, June 9-11, 1993}}, \bibfield{editor}{\bibinfo{person}{John
  Williams}} (Ed.). \bibinfo{publisher}{{ACM}}, \bibinfo{pages}{223--232}.
\newblock
\urldef\tempurl%
\url{https://doi.org/10.1145/165180.165214}
\showDOI{\tempurl}


\bibitem[Gratzer et~al\mbox{.}(2020)]%
        {gratzer20multimodal}
\bibfield{author}{\bibinfo{person}{Daniel Gratzer}, \bibinfo{person}{G.~A.
  Kavvos}, \bibinfo{person}{Andreas Nuyts}, {and} \bibinfo{person}{Lars
  Birkedal}.} \bibinfo{year}{2020}\natexlab{}.
\newblock \showarticletitle{Multimodal Dependent Type Theory}. In
  \bibinfo{booktitle}{\emph{{LICS} '20: 35th Annual {ACM/IEEE} Symposium on
  Logic in Computer Science, Saarbr{\"{u}}cken, Germany, July 8-11, 2020}},
  \bibfield{editor}{\bibinfo{person}{Holger Hermanns}, \bibinfo{person}{Lijun
  Zhang}, \bibinfo{person}{Naoki Kobayashi}, {and} \bibinfo{person}{Dale
  Miller}} (Eds.). \bibinfo{publisher}{{ACM}}, \bibinfo{pages}{492--506}.
\newblock
\urldef\tempurl%
\url{https://doi.org/10.1145/3373718.3394736}
\showDOI{\tempurl}


\bibitem[Hofmann(1995)]%
        {hofmann95extensional}
\bibfield{author}{\bibinfo{person}{Martin Hofmann}.}
  \bibinfo{year}{1995}\natexlab{}.
\newblock \bibinfo{booktitle}{\emph{Extensional concepts in intensional type
  theory}}.
\newblock \bibinfo{publisher}{University of Edinburgh, Department of Computer
  Science}.
\newblock


\bibitem[Hofmann(1997)]%
        {Hofmann97syntaxand}
\bibfield{author}{\bibinfo{person}{Martin Hofmann}.}
  \bibinfo{year}{1997}\natexlab{}.
\newblock \showarticletitle{Syntax and Semantics of Dependent Types}. In
  \bibinfo{booktitle}{\emph{Semantics and Logics of Computation}}.
  \bibinfo{publisher}{Cambridge University Press}, \bibinfo{pages}{79--130}.
\newblock


\bibitem[Hofmann(1999)]%
        {DBLP:conf/lics/Hofmann99}
\bibfield{author}{\bibinfo{person}{Martin Hofmann}.}
  \bibinfo{year}{1999}\natexlab{}.
\newblock \showarticletitle{Semantical Analysis of Higher-Order Abstract
  Syntax}. In \bibinfo{booktitle}{\emph{14th Annual {IEEE} Symposium on Logic
  in Computer Science, Trento, Italy, July 2-5, 1999}}.
  \bibinfo{publisher}{{IEEE} Computer Society}, \bibinfo{pages}{204--213}.
\newblock
\urldef\tempurl%
\url{https://doi.org/10.1109/LICS.1999.782616}
\showDOI{\tempurl}


\bibitem[Hu et~al\mbox{.}(2022)]%
        {DBLP:journals/corr/abs-2206-02831}
\bibfield{author}{\bibinfo{person}{Jason Z.~S. Hu}, \bibinfo{person}{Brigitte
  Pientka}, {and} \bibinfo{person}{Ulrich Sch{\"{o}}pp}.}
  \bibinfo{year}{2022}\natexlab{}.
\newblock \showarticletitle{A Category Theoretic View of Contextual Types: from
  Simple Types to Dependent Types}.
\newblock \bibinfo{journal}{\emph{CoRR}}  \bibinfo{volume}{abs/2206.02831}
  (\bibinfo{year}{2022}).
\newblock
\urldef\tempurl%
\url{https://doi.org/10.48550/arXiv.2206.02831}
\showDOI{\tempurl}
\showeprint[arXiv]{2206.02831}


\bibitem[Huber(2016)]%
        {huber-thesis}
\bibfield{author}{\bibinfo{person}{Simon Huber}.}
  \bibinfo{year}{2016}\natexlab{}.
\newblock \emph{\bibinfo{title}{Cubical Interpretations of Type Theory}}.
\newblock \bibinfo{thesistype}{Ph.\,D. Dissertation}.
  \bibinfo{school}{University of Gothenburg}.
\newblock


\bibitem[Jang et~al\mbox{.}(2022)]%
        {DBLP:journals/pacmpl/JangGMP22}
\bibfield{author}{\bibinfo{person}{Junyoung Jang}, \bibinfo{person}{Samuel
  G{\'{e}}lineau}, \bibinfo{person}{Stefan Monnier}, {and}
  \bibinfo{person}{Brigitte Pientka}.} \bibinfo{year}{2022}\natexlab{}.
\newblock \showarticletitle{M{\oe}bius: metaprogramming using contextual types:
  the stage where system f can pattern match on itself}.
\newblock \bibinfo{journal}{\emph{Proc. {ACM} Program. Lang.}}
  \bibinfo{volume}{6}, \bibinfo{number}{{POPL}} (\bibinfo{year}{2022}),
  \bibinfo{pages}{1--27}.
\newblock
\urldef\tempurl%
\url{https://doi.org/10.1145/3498700}
\showDOI{\tempurl}


\bibitem[Jones et~al\mbox{.}(1993)]%
        {partial-evaluation}
\bibfield{author}{\bibinfo{person}{Neil~D. Jones}, \bibinfo{person}{Carsten~K.
  Gomard}, {and} \bibinfo{person}{Peter Sestoft}.}
  \bibinfo{year}{1993}\natexlab{}.
\newblock \bibinfo{booktitle}{\emph{Partial evaluation and automatic program
  generation}}.
\newblock \bibinfo{publisher}{Prentice Hall}.
\newblock
\showISBNx{978-0-13-020249-9}


\bibitem[Kameyama et~al\mbox{.}(2011)]%
        {DBLP:journals/jfp/KameyamaKS11}
\bibfield{author}{\bibinfo{person}{Yukiyoshi Kameyama}, \bibinfo{person}{Oleg
  Kiselyov}, {and} \bibinfo{person}{Chung{-}chieh Shan}.}
  \bibinfo{year}{2011}\natexlab{}.
\newblock \showarticletitle{Shifting the stage - Staging with delimited
  control}.
\newblock \bibinfo{journal}{\emph{J. Funct. Program.}} \bibinfo{volume}{21},
  \bibinfo{number}{6} (\bibinfo{year}{2011}), \bibinfo{pages}{617--662}.
\newblock
\urldef\tempurl%
\url{https://doi.org/10.1017/S0956796811000256}
\showDOI{\tempurl}


\bibitem[Kaposi et~al\mbox{.}(2019a)]%
        {gluing}
\bibfield{author}{\bibinfo{person}{Ambrus Kaposi}, \bibinfo{person}{Simon
  Huber}, {and} \bibinfo{person}{Christian Sattler}.}
  \bibinfo{year}{2019}\natexlab{a}.
\newblock \showarticletitle{Gluing for Type Theory}. In
  \bibinfo{booktitle}{\emph{4th International Conference on Formal Structures
  for Computation and Deduction, {FSCD} 2019, June 24-30, 2019, Dortmund,
  Germany}} \emph{(\bibinfo{series}{LIPIcs}, Vol.~\bibinfo{volume}{131})},
  \bibfield{editor}{\bibinfo{person}{Herman Geuvers}} (Ed.).
  \bibinfo{publisher}{Schloss Dagstuhl - Leibniz-Zentrum f{\"{u}}r Informatik},
  \bibinfo{pages}{25:1--25:19}.
\newblock
\urldef\tempurl%
\url{https://doi.org/10.4230/LIPIcs.FSCD.2019.25}
\showDOI{\tempurl}


\bibitem[Kaposi et~al\mbox{.}(2019b)]%
        {kaposi2019constructing}
\bibfield{author}{\bibinfo{person}{Ambrus Kaposi},
  \bibinfo{person}{Andr{\'{a}}s Kov{\'{a}}cs}, {and} \bibinfo{person}{Thorsten
  Altenkirch}.} \bibinfo{year}{2019}\natexlab{b}.
\newblock \showarticletitle{Constructing quotient inductive-inductive types}.
\newblock \bibinfo{journal}{\emph{Proc. {ACM} Program. Lang.}}
  \bibinfo{volume}{3}, \bibinfo{number}{{POPL}} (\bibinfo{year}{2019}),
  \bibinfo{pages}{2:1--2:24}.
\newblock
\urldef\tempurl%
\url{https://doi.org/10.1145/3290315}
\showDOI{\tempurl}


\bibitem[Kawata and Igarashi(2019)]%
        {multi-stage-calculus}
\bibfield{author}{\bibinfo{person}{Akira Kawata} {and} \bibinfo{person}{Atsushi
  Igarashi}.} \bibinfo{year}{2019}\natexlab{}.
\newblock \showarticletitle{A Dependently Typed Multi-stage Calculus}. In
  \bibinfo{booktitle}{\emph{Programming Languages and Systems - 17th Asian
  Symposium, {APLAS} 2019, Nusa Dua, Bali, Indonesia, December 1-4, 2019,
  Proceedings}} \emph{(\bibinfo{series}{Lecture Notes in Computer Science},
  Vol.~\bibinfo{volume}{11893})},
  \bibfield{editor}{\bibinfo{person}{Anthony~Widjaja Lin}} (Ed.).
  \bibinfo{publisher}{Springer}, \bibinfo{pages}{53--72}.
\newblock
\urldef\tempurl%
\url{https://doi.org/10.1007/978-3-030-34175-6\_4}
\showDOI{\tempurl}


\bibitem[Kiselyov(2014)]%
        {kiselyov14metaocaml}
\bibfield{author}{\bibinfo{person}{Oleg Kiselyov}.}
  \bibinfo{year}{2014}\natexlab{}.
\newblock \showarticletitle{The Design and Implementation of {BER} MetaOCaml -
  System Description}. In \bibinfo{booktitle}{\emph{Functional and Logic
  Programming - 12th International Symposium, {FLOPS} 2014, Kanazawa, Japan,
  June 4-6, 2014. Proceedings}} \emph{(\bibinfo{series}{Lecture Notes in
  Computer Science}, Vol.~\bibinfo{volume}{8475})},
  \bibfield{editor}{\bibinfo{person}{Michael Codish} {and}
  \bibinfo{person}{Eijiro Sumii}} (Eds.). \bibinfo{publisher}{Springer},
  \bibinfo{pages}{86--102}.
\newblock
\urldef\tempurl%
\url{https://doi.org/10.1007/978-3-319-07151-0\_6}
\showDOI{\tempurl}


\bibitem[Kiselyov and Yallop(2022)]%
        {DBLP:journals/corr/abs-2201-00495}
\bibfield{author}{\bibinfo{person}{Oleg Kiselyov} {and} \bibinfo{person}{Jeremy
  Yallop}.} \bibinfo{year}{2022}\natexlab{}.
\newblock \showarticletitle{let (rec) insertion without Effects, Lights or
  Magic}.
\newblock \bibinfo{journal}{\emph{CoRR}}  \bibinfo{volume}{abs/2201.00495}
  (\bibinfo{year}{2022}).
\newblock
\showeprint[arXiv]{2201.00495}
\urldef\tempurl%
\url{https://arxiv.org/abs/2201.00495}
\showURL{%
\tempurl}


\bibitem[Kovács(2022)]%
        {staged-demo}
\bibfield{author}{\bibinfo{person}{András Kovács}.}
  \bibinfo{year}{2022}\natexlab{}.
\newblock \bibinfo{title}{Demo implementation for the paper "Staged Compilation
  With Two-Level Type Theory"}.
\newblock
\newblock
\urldef\tempurl%
\url{https://doi.org/10.5281/zenodo.6757373}
\showDOI{\tempurl}


\bibitem[Licata et~al\mbox{.}(2018)]%
        {licata2018internal}
\bibfield{author}{\bibinfo{person}{Daniel~R. Licata}, \bibinfo{person}{Ian
  Orton}, \bibinfo{person}{Andrew~M. Pitts}, {and} \bibinfo{person}{Bas
  Spitters}.} \bibinfo{year}{2018}\natexlab{}.
\newblock \showarticletitle{Internal Universes in Models of Homotopy Type
  Theory}. In \bibinfo{booktitle}{\emph{3rd International Conference on Formal
  Structures for Computation and Deduction, {FSCD} 2018, July 9-12, 2018,
  Oxford, {UK}}} \emph{(\bibinfo{series}{LIPIcs}, Vol.~\bibinfo{volume}{108})},
  \bibfield{editor}{\bibinfo{person}{H{\'{e}}l{\`{e}}ne Kirchner}} (Ed.).
  \bibinfo{publisher}{Schloss Dagstuhl - Leibniz-Zentrum f{\"{u}}r Informatik},
  \bibinfo{pages}{22:1--22:17}.
\newblock
\urldef\tempurl%
\url{https://doi.org/10.4230/LIPIcs.FSCD.2018.22}
\showDOI{\tempurl}


\bibitem[L{\"{o}}h and Magalh{\~{a}}es(2011)]%
        {loh11generic}
\bibfield{author}{\bibinfo{person}{Andres L{\"{o}}h} {and}
  \bibinfo{person}{Jos{\'{e}}~Pedro Magalh{\~{a}}es}.}
  \bibinfo{year}{2011}\natexlab{}.
\newblock \showarticletitle{Generic programming with indexed functors}. In
  \bibinfo{booktitle}{\emph{Proceedings of the seventh {ACM} {SIGPLAN} workshop
  on Generic programming, WGP@ICFP 2011, Tokyo, Japan, September 19-21, 2011}},
  \bibfield{editor}{\bibinfo{person}{Jaakko J{\"{a}}rvi} {and}
  \bibinfo{person}{Shin{-}Cheng Mu}} (Eds.). \bibinfo{publisher}{{ACM}},
  \bibinfo{pages}{1--12}.
\newblock
\urldef\tempurl%
\url{https://doi.org/10.1145/2036918.2036920}
\showDOI{\tempurl}


\bibitem[Mac~Lane(1998)]%
        {maclane98categories}
\bibfield{author}{\bibinfo{person}{Saunders Mac~Lane}.}
  \bibinfo{year}{1998}\natexlab{}.
\newblock \bibinfo{booktitle}{\emph{{Categories for the Working Mathematician}}
  (\bibinfo{edition}{2nd} ed.)}.
\newblock \bibinfo{publisher}{Springer}.
\newblock
\showISBNx{0387984038}
\urldef\tempurl%
\url{http://www.amazon.com/exec/obidos/redirect?tag=citeulike07-20\&path=ASIN/0387984038}
\showURL{%
\tempurl}


\bibitem[Orton and Pitts(2016)]%
        {orton_et_al:LIPIcs:2016:6564}
\bibfield{author}{\bibinfo{person}{Ian Orton} {and} \bibinfo{person}{Andrew~M.
  Pitts}.} \bibinfo{year}{2016}\natexlab{}.
\newblock \showarticletitle{{Axioms for Modelling Cubical Type Theory in a
  Topos}}. In \bibinfo{booktitle}{\emph{25th EACSL Annual Conference on
  Computer Science Logic (CSL 2016)}} \emph{(\bibinfo{series}{Leibniz
  International Proceedings in Informatics (LIPIcs)},
  Vol.~\bibinfo{volume}{62})}, \bibfield{editor}{\bibinfo{person}{Jean-Marc
  Talbot} {and} \bibinfo{person}{Laurent Regnier}} (Eds.).
  \bibinfo{publisher}{Schloss Dagstuhl--Leibniz-Zentrum fuer Informatik},
  \bibinfo{address}{Dagstuhl, Germany}, \bibinfo{pages}{24:1--24:19}.
\newblock
\showISBNx{978-3-95977-022-4}
\showISSN{1868-8969}
\urldef\tempurl%
\url{https://doi.org/10.4230/LIPIcs.CSL.2016.24}
\showDOI{\tempurl}


\bibitem[P{\'{e}}drot and Tabareau(2020)]%
        {fire-triangle}
\bibfield{author}{\bibinfo{person}{Pierre{-}Marie P{\'{e}}drot} {and}
  \bibinfo{person}{Nicolas Tabareau}.} \bibinfo{year}{2020}\natexlab{}.
\newblock \showarticletitle{The fire triangle: how to mix substitution,
  dependent elimination, and effects}.
\newblock \bibinfo{journal}{\emph{Proc. {ACM} Program. Lang.}}
  \bibinfo{volume}{4}, \bibinfo{number}{{POPL}} (\bibinfo{year}{2020}),
  \bibinfo{pages}{58:1--58:28}.
\newblock
\urldef\tempurl%
\url{https://doi.org/10.1145/3371126}
\showDOI{\tempurl}


\bibitem[Pujet and Tabareau(2022)]%
        {DBLP:journals/pacmpl/PujetT22}
\bibfield{author}{\bibinfo{person}{Lo{\"{\i}}c Pujet} {and}
  \bibinfo{person}{Nicolas Tabareau}.} \bibinfo{year}{2022}\natexlab{}.
\newblock \showarticletitle{Observational equality: now for good}.
\newblock \bibinfo{journal}{\emph{Proc. {ACM} Program. Lang.}}
  \bibinfo{volume}{6}, \bibinfo{number}{{POPL}} (\bibinfo{year}{2022}),
  \bibinfo{pages}{1--27}.
\newblock
\urldef\tempurl%
\url{https://doi.org/10.1145/3498693}
\showDOI{\tempurl}


\bibitem[Sheard and Jones(2002)]%
        {template-haskell}
\bibfield{author}{\bibinfo{person}{Tim Sheard} {and} \bibinfo{person}{Simon
  L.~Peyton Jones}.} \bibinfo{year}{2002}\natexlab{}.
\newblock \showarticletitle{Template meta-programming for Haskell}.
\newblock \bibinfo{journal}{\emph{{ACM} {SIGPLAN} Notices}}
  \bibinfo{volume}{37}, \bibinfo{number}{12} (\bibinfo{year}{2002}),
  \bibinfo{pages}{60--75}.
\newblock
\urldef\tempurl%
\url{https://doi.org/10.1145/636517.636528}
\showDOI{\tempurl}


\bibitem[Sterling(2021)]%
        {sterlingthesis}
\bibfield{author}{\bibinfo{person}{Jonathan Sterling}.}
  \bibinfo{year}{2021}\natexlab{}.
\newblock \emph{\bibinfo{title}{First Steps in Synthetic Tait Computability}}.
\newblock \bibinfo{thesistype}{Ph.\,D. Dissertation}. \bibinfo{school}{Carnegie
  Mellon University Pittsburgh, PA}.
\newblock


\bibitem[Sterling and Angiuli(2021)]%
        {cubicalnbe}
\bibfield{author}{\bibinfo{person}{Jonathan Sterling} {and}
  \bibinfo{person}{Carlo Angiuli}.} \bibinfo{year}{2021}\natexlab{}.
\newblock \showarticletitle{Normalization for Cubical Type Theory}. In
  \bibinfo{booktitle}{\emph{36th Annual {ACM/IEEE} Symposium on Logic in
  Computer Science, {LICS} 2021, Rome, Italy, June 29 - July 2, 2021}}.
  \bibinfo{publisher}{{IEEE}}, \bibinfo{pages}{1--15}.
\newblock
\urldef\tempurl%
\url{https://doi.org/10.1109/LICS52264.2021.9470719}
\showDOI{\tempurl}


\bibitem[Taha and Sheard(2000)]%
        {metaml}
\bibfield{author}{\bibinfo{person}{Walid Taha} {and} \bibinfo{person}{Tim
  Sheard}.} \bibinfo{year}{2000}\natexlab{}.
\newblock \showarticletitle{MetaML and multi-stage programming with explicit
  annotations}.
\newblock \bibinfo{journal}{\emph{Theor. Comput. Sci.}} \bibinfo{volume}{248},
  \bibinfo{number}{1-2} (\bibinfo{year}{2000}), \bibinfo{pages}{211--242}.
\newblock
\urldef\tempurl%
\url{https://doi.org/10.1016/S0304-3975(00)00053-0}
\showDOI{\tempurl}


\bibitem[Voevodsky(2013)]%
        {hts}
\bibfield{author}{\bibinfo{person}{Vladimir Voevodsky}.}
  \bibinfo{year}{2013}\natexlab{}.
\newblock \bibinfo{title}{A simple type system with two identity types}.
  (\bibinfo{year}{2013}).
\newblock
\newblock
\shownote{Unpublished note}.


\bibitem[Winterhalter et~al\mbox{.}(2019)]%
        {DBLP:conf/cpp/WinterhalterST19}
\bibfield{author}{\bibinfo{person}{Th{\'{e}}o Winterhalter},
  \bibinfo{person}{Matthieu Sozeau}, {and} \bibinfo{person}{Nicolas Tabareau}.}
  \bibinfo{year}{2019}\natexlab{}.
\newblock \showarticletitle{Eliminating reflection from type theory}. In
  \bibinfo{booktitle}{\emph{Proceedings of the 8th {ACM} {SIGPLAN}
  International Conference on Certified Programs and Proofs, {CPP} 2019,
  Cascais, Portugal, January 14-15, 2019}},
  \bibfield{editor}{\bibinfo{person}{Assia Mahboubi} {and}
  \bibinfo{person}{Magnus~O. Myreen}} (Eds.). \bibinfo{publisher}{{ACM}},
  \bibinfo{pages}{91--103}.
\newblock
\urldef\tempurl%
\url{https://doi.org/10.1145/3293880.3294095}
\showDOI{\tempurl}


\bibitem[Xie et~al\mbox{.}(2022)]%
        {typed-th}
\bibfield{author}{\bibinfo{person}{Ningning Xie}, \bibinfo{person}{Matthew
  Pickering}, \bibinfo{person}{Andres L{\"{o}}h}, \bibinfo{person}{Nicolas Wu},
  \bibinfo{person}{Jeremy Yallop}, {and} \bibinfo{person}{Meng Wang}.}
  \bibinfo{year}{2022}\natexlab{}.
\newblock \showarticletitle{Staging with class: a specification for typed
  {Template} {Haskell}}.
\newblock \bibinfo{journal}{\emph{Proc. {ACM} Program. Lang.}}
  \bibinfo{volume}{6}, \bibinfo{number}{{POPL}} (\bibinfo{year}{2022}),
  \bibinfo{pages}{1--30}.
\newblock
\urldef\tempurl%
\url{https://doi.org/10.1145/3498723}
\showDOI{\tempurl}


\bibitem[Yallop et~al\mbox{.}(2018)]%
        {DBLP:journals/pacmpl/YallopGK18}
\bibfield{author}{\bibinfo{person}{Jeremy Yallop}, \bibinfo{person}{Tamara von
  Glehn}, {and} \bibinfo{person}{Ohad Kammar}.}
  \bibinfo{year}{2018}\natexlab{}.
\newblock \showarticletitle{Partially-static data as free extension of
  algebras}.
\newblock \bibinfo{journal}{\emph{Proc. {ACM} Program. Lang.}}
  \bibinfo{volume}{2}, \bibinfo{number}{{ICFP}} (\bibinfo{year}{2018}),
  \bibinfo{pages}{100:1--100:30}.
\newblock
\urldef\tempurl%
\url{https://doi.org/10.1145/3236795}
\showDOI{\tempurl}


\end{thebibliography}

\end{document}